\newenvironment{thm}{\begin{theorem}}{\end{theorem}}
\newenvironment{lem}{\begin{lemma}}{\end{lemma}}
\newenvironment{cor}{\begin{corollary}}{\end{corollary}}
\newcommand{\protocolName}{QScale\xspace}
\newcommand{\parentHash}{\textit{parent\_hash}\xspace}
\newcommand{\curEpoch}{\textit{cur\_epoch}\xspace}
\newcommand{\curRound}{\textit{cur\_round}\xspace}
\newcommand{\height}{\textit{height}\xspace}
\newcommand{\txs}{\textit{txs}\xspace}
\newcommand{\epoch}{\textit{epoch}\xspace}
\newcommand{\ledger}{\textit{ledger}\xspace}
\newcommand{\cert}{\textit{cert}\xspace}
\newcommand{\block}{\textit{block}\xspace}
\newcommand{\propagate}{\texttt{propagate}()\xspace}
\newcommand{\hash}[1]{\texttt{H}(#1)\xspace}
\newcommand{\var}[1]{\textit{#1}\xspace}
\newcommand{\fn}[1]{\texttt{#1}\xspace}
\newcommand{\EX}[1]{\mathbb{E}[#1]\xspace}
\newcommand{\prProp}{p_{\textit{prop}}\xspace}
\newcommand{\prSample}{p_{\textit{sample}}\xspace}
\newcommand{\prVote}{p_{\textit{vote}}\xspace}
\newcommand{\genesis}{B_{\textit{gen}}\xspace}
\newcommand{\poly}{{\sf poly}}
\newcommand{\polylog}{\operatorname{polylog}}
\newcommand{\myparagraph}[1]{\medskip\noindent\textbf{#1}}
\newcommand{\ignore}[1]{}
\renewcommand{\alglinenumber}[1]{\ifnum#1<10 0\fi#1}
\begin{document}
\title{QScale: Probabilistic Chained Consensus\\ for Moderate-Scale Systems}

\author{Hasan Heydari\inst{1} \and
        Alysson Bessani\inst{1} \and
        Kartik Nayak\inst{2}}
\institute{LASIGE, Faculdade de Ciências, Universidade de Lisboa, Portugal
\and
Duke University, USA}

\maketitle   
\begin{abstract}
Existing distributed ledger protocols either incur a high communication complexity and are thus suited to systems with a small number of processes (e.g., PBFT), or rely on committee-sampling-based approaches that only work for a very large number of processes (e.g., Algorand).
Neither of these lines of work is well-suited for moderate-scale distributed ledgers ranging from a few hundred to a thousand processes, which are common in production (e.g, Redbelly, Sui).
The goal of this work is to design a distributed ledger with sub-linear communication complexity per process, sub-quadratic total communication complexity, and low latency for finalizing a block into the ledger, such that it can be used for moderate-scale systems.
We propose \protocolName, a protocol in which every process incurs only $\widetilde{O}(\kappa \sqrt{n})$ communication complexity per-block in expectation, $\widetilde{O}(n\kappa)$ total communication complexity per-block in expectation, and a best-case latency of $O(\kappa)$ rounds while ensuring safety and liveness with overwhelming probability, with $\kappa$ being a small security parameter.

\keywords{Byzantine fault-tolerance, Consensus, Probabilistic protocols, Blockchains.}

\end{abstract}
\section{Introduction}

Protocols for a distributed ledger allow a distributed set of processes to agree on an unbounded, ordered sequence (i.e., a chain) of blocks, each of which contains some predetermined number of transactions. 
Security for a distributed ledger in the presence of some fraction of malicious processes requires two fundamental properties: \textit{safety} and \textit{liveness}.
Safety requires that all honest processes agree on any blocks they output.
Liveness guarantees progress, in the sense that if all honest processes hold some transaction as input, then that transaction will eventually be included in some block output by those processes.
Solving this problem requires solving the well-known distributed consensus problem~\cite{lamport_1982}.

Scaling distributed ledger (or consensus) protocols efficiently to a large number of processes is a fundamental problem in distributed computing.
There are two key metrics used to measure the efficiency of this scalability:  latency and communication complexity. 
Latency refers to the number of rounds required to commit a transaction to the ledger of honest processes.
Total (resp.\ per-process) communication complexity refers to the number of bits sent by all honest processes (resp.\ some honest process) to commit a transaction to the ledger.

\myparagraph{State-of-the-art approaches to scaling.} There are several high-level approaches that have been taken to improve these metrics.

The first approach was popularized by PBFT~\cite{pbft} and adopted and improved by several other works, such as Tendermint~\cite{tendermint-zarko}, Simplex~\cite{chan2023simplex,shoup:LIPIcs.DISC.2024.37}, Sync HotStuff~\cite{abraham2020sync}, HotStuff~\cite{yin19hotstuff,hotstuff2}, and ICC~\cite{camenisch2022internet,abraham2018dfinity}, among others.
At a high level, in this approach, processes work in a sequence of all-to-all (sometimes one-to-all and all-to-one) communication rounds and rely on the intersection of quorums of processes to achieve safety and liveness.
These protocols typically incur $O(1)$ round latency in the best case, $\poly(n)$ total communication complexity, and $\Omega(n)$ per-process communication complexity.
In practice, this approach has been adopted by several blockchains~\cite{tendermint-blockchain,sui-blockchain,aptos-blockchain,dfinity-blockchain,HotShot}.
However, when $n$ tends to get larger, say to several hundreds or thousands of processes, the high per-process (particularly on the leader) and total communication adversely impact the latency of the system~\cite{yin19hotstuff,neiheiser2021kauri,bessani2014state}.

The second approach, popularized by Algorand~\cite{algorand,algorand-realworld}, attempts to scale to a large number of processes efficiently.
The key idea is to elect random \emph{committees} of size $O(c)$ such that the committee has an honest majority of processes with overwhelming probability, with $c$ being a security parameter.
In this approach, only processes in the committee send messages to all other processes. This yields a latency of $O(1)$ rounds and a total communication complexity of $O(n\!\cdot\!\poly(c))$.
However, since committee members communicate with all processes, it still incurs $\Omega(n)$ per-process communication.
More importantly, when used in practice, sampling an honest committee with overwhelming probability is useful only when $n$ is large, e.g., Algorand considers $n \geq 10^{12}$~\cite{algorand-realworld}.
In such situations, the size of the committee is typically of the order of thousands of processes~\cite{algorand}.
For a small $n$, to ensure there is an honest majority in the committee with overwhelming probability, we need $c\!\sim\!n$ (see Table~\ref{tab:safety:violation:static:committee} in Appendix~\ref{app:static:commitee} for empirical values.) 
Thus, when $n$ is in the range of a few hundred to several thousand processes, this approach devolves into the first approach.

A third approach, popularized by the works of King and Saia~\cite{10.1145/3465084.3467897,10.1145/1989727.1989732,king2011load,gelles2024optimal}, takes this a step further to reduce the per-process communication to a sublinear number of processes by creating several sub-committees that are poly-logarithmic in size. However, this approach is suitable only for even larger $n$'s, and thus, it has limited practical applicability.

All of the above approaches aim to achieve safety and liveness properties with probability $1$ (or with overwhelming probability).
Another approach popularized by Nakamoto consensus~\cite{nakamoto_2008} relaxes this requirement.
In this line of work, processes obtain probabilistic confirmations where the probability of a safety violation decreases with increasing rounds.
While not necessary, systems based on this approach have relied on peer-to-peer communication (gossip) to disseminate transactions.
The protocol incurs poor latency since dissemination of a ``block'' requires $O(\log n)$ rounds and we need to wait for $\kappa$ blocks to commit a transaction.
However, due to the use of gossip, this type of protocol incurs a small per-process communication complexity.
In practice, this has been shown to scale to a large number of processes at poor latency ($O(\kappa \log n)$ rounds if we want a security of $2^{-\kappa}$).
Moreover, the protocol assumes that the gossip layer allows all honest processes to communicate with each other (which may not be true if an honest process is connected to only Byzantine processes).

The final approach leverages probabilistic techniques either to disseminate the leader's proposal---such as proposal dissemination via expander-graph overlays~\cite{yandamuri2023communication}---or to cast votes more efficiently~\cite{probft}. 
While expander-graph–based dissemination can theoretically reduce per-process communication complexity to sublinear levels, it incurs additional multi-hop latency proportional to the graph's diameter---typically $O(\log n)$. 
In contrast, the vote casting technique still requires the leader to send or collect messages involving a linear number of processes, which limits scalability in practice.

\myparagraph{Key question: Scaling to moderate-sized systems.}
Several prominent blockchains are currently deployed in moderate-sized systems; for example, Sui with between $120$ to $333$ validators~\cite{sui_size}, Redbelly at $300$ nodes~\cite{redbelly_size}, Stellar at $186$ nodes~\cite{stellar_size}, and XRP with over $150$ validators~\cite{xrp_size}.
Given the state-of-the-art described, there are efficient solutions tailored to settings with a small $n$ (e.g., HotStuff), and also towards a large $n$ (e.g., Algorand).
However, these approaches may not be ideal for moderate system sizes, e.g., from 200 to 1000; this is the key problem that we address in this work.

Informally, a \textit{probabilistic distributed ledger} guarantees the following:
(1) with a probability that depends on the protocol security parameter $\kappa$, the finalized ledgers of any two correct processes are the same (safety), and (2) with probability $1$, any value received by a correct process will eventually appear in the finalized ledger of all correct processes (liveness).
Asymptotically, our goal is to incur sub-linear communication per process, sub-quadratic overall communication complexity, and low latency for solving the probabilistic distributed ledger.
From a practical standpoint, we aim to achieve these goals with small constants, making the approach applicable to systems ranging from hundreds to thousands of processes. 

\myparagraph{Overview of our solution.}
Our solution, \protocolName, operates under both synchronous and partial synchronous communication models, with the only difference being in their safety and liveness guarantees.
At a high level, \protocolName adapts protocols from the first approach (e.g., HotStuff) to use ideas from the line of work with probabilistic confirmations.
In particular, we replace process-to-all broadcasts with a lightweight, probabilistic propagation sub-protocol and employ a probabilistic mechanism to ensure only a sublinear number of processes send their messages to other processes, thereby avoiding any linear communication in a round.
Specifically, the protocol runs in epochs where, in each epoch~$e$, a designated leader sends its proposal block to a randomly selected sample of processes, each of which then relays the proposal to a random sample.  
If the proposal is valid, each process that receives it flips a local random coin to decide whether to send a message (vote) to the leader of epoch~$e+1$.
In parallel, processes propagate their most recently received proposals using the propagation sub-protocol.
The leader of epoch~$e+1$ waits for a sufficient number of matching votes---indeed sublinear---to certify the proposal and create a certified block.
Once a process observes $\kappa$ consecutive certified blocks that extend a previously committed block, it can safely commit the first of these blocks.
To ensure that this approach works efficiently for a moderately sized $n$, we choose random samples of expected size $O(\sqrt{n})$. 
As we show later in the paper (Section~\ref{sec:evaluation}),
when $n=500$, $f = 150$ (resp. $f=75$), and when the leader and the processes in the leader's sample communicate with only $3\sqrt{n} \approx 67$ processes in expectation while all other processes communicate with only $30$ processes, we can commit a transaction with $\kappa = 5$ epochs best-case latency (corresponding to 15 rounds) with probability of safety violation of $\approx 2^{-20}$ under synchrony (resp. $\approx 2^{-8}$ under partial synchrony).

We obtain the following result for the probabilistic distributed ledger:
\begin{theorem}[Informal main result]
In a message-passing system with $n$ processes, where up to $f = \epsilon\,n$ processes may be Byzantine under a static corruption adversary, \emph{\protocolName} solves the probabilistic distributed ledger by providing the following per-block guarantees:

\begin{itemize}[leftmargin=1.1em]
    \item $\widetilde{O}(\kappa\,\sqrt{n})$ per-process communication complexity,
    \item $\widetilde{O}(\kappa)$ amortized per-process communication complexity, 
    \item $\widetilde{O}(\kappa\,n)$ total communication complexity,
    \item $\widetilde{O}(n)$ amortized total communication complexity,
    \item liveness is ensured with probability $1$,
    \item safety is ensured with probability 
        $1 - \exp\!\big(O(-(\kappa-1)\polylog{n})\big)$  
        under partial synchrony (resp. synchrony) 
        with $\epsilon \in [0,1/3)$ (resp. $\epsilon \in [0,1/2)$).
\end{itemize}
\end{theorem}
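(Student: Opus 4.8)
The plan is to prove the six claims separately, exploiting the fact that under static corruption every source of randomness in the protocol---the leader's choice of sample, the relay samples, and each process's voting coin---is drawn obliviously of the adversary, so it can be analyzed with standard tail bounds.

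\textbf{Communication complexity (first four bullets).} First I would bound the traffic generated in one epoch. The leader sends its proposal to a sample of expected size $\Theta(\sqrt n)$; each sampled process relays to a further sample of expected size $\Theta(\sqrt n)$, contributing $\widetilde O(\sqrt n)$ per relayer and $\widetilde O(n)$ in aggregate; every process that receives a valid proposal votes to the next leader with probability $\prVote=\Theta(1/\sqrt n)$, so the next leader receives $\widetilde O(\sqrt n)$ votes in expectation; and, in parallel, every process runs the propagation sub-protocol at a bounded fan-out, contributing $\widetilde O(1)$ per process and $\widetilde O(n)$ overall per round. Each message (block, certificate, vote) has size $\poly(\kappa,\log n)=\widetilde O(1)$, so an epoch costs $\widetilde O(n)$ total communication, and the per-process peak of $\widetilde O(\sqrt n)$ is paid only by the leader and the $\Theta(\sqrt n)$ processes in its sample. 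Since a block is committed within $O(\kappa)$ epochs of being proposed and the chain is pipelined (one block per epoch in the good case), charging epoch costs to committed blocks gives the $\widetilde O(\kappa\sqrt n)$ per-process and $\widetilde O(\kappa n)$ total per-block bounds. For the amortized bounds I would take expectations over sample membership: a fixed process is in the leader's sample with probability $\Theta(1/\sqrt n)$, so its expected per-epoch cost is $\Theta(1/\sqrt n)\cdot\widetilde O(\sqrt n)+\widetilde O(1)=\widetilde O(1)$ up to $\poly(\kappa)$ factors from re-propagating recent blocks, which over the blocks committed in steady state amortizes to $\widetilde O(\kappa)$ per process and $\widetilde O(n)$ in total.

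\textbf{Liveness (fifth bullet).} After GST (from the outset, under synchrony), I would show that a stretch of epochs is \emph{good}---an honest leader, the proposal delivered to enough honest processes, enough of them voting, and the next (honest) leader certifying the block---with probability bounded below by an absolute constant, and that these events are mutually independent across disjoint stretches because each uses fresh coins that the static adversary cannot bias. By the second Borel--Cantelli lemma, infinitely many good stretches occur almost surely; since a good stretch's honest leader extends the highest certified block it knows, the certified chain grows without bound, so a fresh block becomes $\kappa$-deep and is committed infinitely often. Combined with the guarantee of the propagation sub-protocol that any block held by one honest process is eventually delivered to all honest processes (with probability $1$, via re-propagation over infinitely many rounds) and with honest leaders re-proposing pending transactions, every transaction input to an honest process eventually appears in every honest ledger: liveness with probability $1$.

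\textbf{Safety (sixth bullet, the crux).} A safety violation means two honest processes commit conflicting blocks at some height $h$; by the commit rule this requires two certified chains of length $\kappa$ that agree with the committed prefix below $h$ but diverge at $h$, so on the ``minority'' fork a block is certified against the honest preference at each of $\kappa-1$ of the heights $h,\dots,h+\kappa-1$. I would fix the honestly-supported chain and bound the probability of each such minority certificate: a certificate needs a quorum of votes; an honest process votes for at most one proposal per epoch (enforced by timeouts under synchrony, by a HotStuff-style lock and quorum intersection under partial synchrony), and votes are cast on the adversary-oblivious sample, so by a Chernoff bound tuned to the sample and quorum sizes the Byzantine-plus-equivocating vote count on the minority fork falls below the threshold except with probability $\exp(-\Omega(\sqrt n))$, which is well within the claimed $\exp(-\Omega(\polylog n))$---and this is exactly where $\epsilon<1/2$ (synchrony) and $\epsilon<1/3$ (partial synchrony, because the bound comes from quorum intersection rather than timing) enter. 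Since the sampling and voting coins for the $\kappa-1$ heights are independent, a union bound over the $\poly(n)$ choices of (leader, height) times these per-height bounds gives an overall forking probability of $\exp\!\big(O(-(\kappa-1)\polylog n)\big)$. The two obstacles I expect are (i) arranging the concentration argument so that conditioning on the committed prefix and on adversarial scheduling leaves the fresh per-epoch coins independent---this is precisely what the static-corruption assumption buys---and (ii) making the pipelined bookkeeping rigorous, i.e., showing that a length-$\kappa$ minority chain genuinely forces $\kappa-1$ independent bad sampling events (certificates for different heights being formed in interleaved epochs) rather than a single reusable one.
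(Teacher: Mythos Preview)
Your partial-synchrony safety sketch and the liveness outline are essentially what the paper does (quorum intersection at threshold $(n+f)/2$, Chernoff on the voting coin, multiplied over $\kappa-1$ levels; Borel--Cantelli is a cleaner way to state what the paper leaves implicit). Two issues remain, one minor and one a genuine gap.

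\medskip
\textbf{The parameter $\prVote$.} You set $\prVote=\Theta(1/\sqrt n)$, but the protocol uses $\prVote=O(\polylog(n)/n)$ and $q=O(\polylog n)$. This matters for the communication bounds: a certificate carries the identities of its $q$ signers, so with your choice each proposal message would be $\widetilde O(\sqrt n)$ bits rather than $\widetilde O(1)$, and the $O(n)$ propose/disseminate messages per epoch would push total communication to $\widetilde O(n\sqrt n)$, contradicting the third bullet. Your claim that ``each message \ldots\ has size $\poly(\kappa,\log n)$'' is correct for the actual protocol but is not implied by the parameters you wrote down.

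\medskip
\textbf{Safety under synchrony with $\epsilon\in[1/3,1/2)$.} This is the real gap. You argue that an honest process ``votes for at most one proposal per epoch (enforced by timeouts under synchrony)''; true, but conflicting blocks at the same height are proposed in \emph{different} epochs, so this by itself prevents nothing. What stops an honest process from voting for $B_1'$ after voting for $B_2$ (which extends $B_1$) is the lock: once it has certified $B_1$ it refuses anything of height $\le B_1.\height$. The question is why it has certified $B_1$. Under partial synchrony you (and the paper) get this from quorum intersection: both candidate sets exceed $(n+f)/2$, so they overlap in a correct process. But two sets of size $(n+f)/2$ intersect in only $f$ processes, possibly all Byzantine, once $f\ge n/3$; so this argument yields nothing for $\epsilon\ge 1/3$. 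The paper's synchrony proof uses a different mechanism that your proposal never invokes: if \emph{any} correct process is a candidate for $B_{\ell+1}$, it has certified $B_\ell$ and will \emph{propagate} it via the gossip sub-protocol; under synchrony all correct processes then receive $B_\ell$ within the remaining $3(\kappa-\ell)$ rounds with high probability, and the commit rule's explicit check ``no other certified block at this height'' blocks the commit of $B_1'$. The per-level bad event is therefore ``either only Byzantine processes are candidates for $B_{\ell+1}$, or propagation of $B_\ell$ fails to reach everyone,'' not ``the minority fork gathers $q$ votes.'' Without the propagation ingredient your argument cannot reach $\epsilon<1/2$, and the sixth bullet's synchrony clause is unproved.
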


Table~\ref{tab:comparision} summarizes the asymptotic performance of our protocol and the relevant related protocols in terms of communication complexity, as well as best-case latency.

\begin{table}[!t]
    \centering
    \begin{tabular}{l@{\hspace{0.5em}}l@{\hspace{0.5em}}l@{\hspace{0.5em}}l@{\hspace{0.5em}}l@{\hspace{0.5em}}l}
         \toprule
         \vspace{-0.2em}
         & \vtop{
         \hbox{\strut per-process}
         \hbox{\strut comm. comp.}}
         & \vtop{\hbox{\strut comm.} \hbox{\strut complexity}} 
         & latency 
         & \vtop{\hbox{\strut corruption}\hbox{\strut adversary}}
         & \vtop{\hbox{\strut comm.}\hbox{\strut model}}
         \\
        \midrule
        \vspace{-0.2em}
        \vtop{\hbox{\strut PBFT~\cite{pbft}, Simplex~\cite{chan2023simplex},}
        \hbox{\strut ICC~\cite{camenisch2022internet,hanke2018dfinity,abraham2018dfinity}}
        \hbox{\strut Tendermint~\cite{tendermint-zarko}}
        }
        & \raisebox{-3ex}{$O(n)$} 
        & \raisebox{-3ex}{$O(n^2)$}  
        & \raisebox{-3ex}{$O(1)$} & \raisebox{-3ex}{adaptive}
        &\raisebox{-3ex}{psync.}\\

        \midrule
        \vspace{-0.2em}
        \vtop{\hbox{\hfill Sync HotStuff~\cite{abraham2020sync}}}
          & $O(n)$
          & $O(n^2)$
          & $O(1)$ & adaptive
          & sync.
          \\
        
        \midrule
        \vspace{-0.2em}
        \vtop{\hbox{\hfill HotStuff~\cite{yin19hotstuff,hotstuff2}}}
          & $O(n)$
          & $O(n)$
          & $O(1)$ & adaptive
          & psync.
          \\

        \midrule
        \vspace{-0.2em}
        Algorand~\cite{algorand} & $O(n)$ & 
        $O(n\!\cdot\!\poly(c))$$^{\dagger}$ & $O(1)$  & adaptive & sync. \\

        \midrule
        \vspace{-0.2em}
         King and Saia~\cite{gelles2024optimal} & $\widetilde{O}(\sqrt{n})$ 
         & $\widetilde{O}(n\,\sqrt{n})$ & $O(\log{n})$ 
         & adaptive 
         & sync.\\
        \midrule

        \vspace{-0.2em}
        
        Nakamoto consensus~\cite{nakamoto_2008}
        & $\widetilde{O}(\kappa)$ 
        & $\widetilde{O}(\kappa\,n)$ & $O(\kappa \log{n})$
        & adaptive & sync.
        
        \\

        \midrule

        \vspace{-0.2em}
        
        ProBFT~\cite{probft} & $O(n)$ 
        & $O(n\,\sqrt{n})$ & $O(1)$  
        & static & psync.$^{\ddagger}$\\
         \midrule
         \raisebox{-1ex}{\textbf{This paper}} 
         & \raisebox{-1ex}{$\widetilde{O}(\kappa\,\sqrt{n})$} 
         & \raisebox{-1ex}{$\widetilde{O}(\kappa\,n)$} & \raisebox{-1ex}{$O(\kappa)$}
         & \raisebox{-1ex}{static} & \vtop{\hbox{\strut sync.,} \hbox{\strut psync.}}\\
         \bottomrule
    \end{tabular}
    \caption{Comparison of Byzantine consensus protocols in the \textit{common case}. 
    Here, $n$ is the number of processes, $c$ is the committee size, and $\kappa$ is the number of blocks required to commit a block. 
    $^{\dagger}$Algorand requires $c\!\sim\!n$ when $n \le 1000$ to ensure there is an honest majority in the committee with overwhelming probability.
    $^{\ddagger}$ProBFT assumes an adversarial scheduler that manipulates the delivery time of messages independently of the sender's id and whether it is faulty or not.}
    \label{tab:comparision}
\end{table}

\smallskip\noindent\textbf{Paper organization.}
The remainder of the paper is organized as follows. 
Section~\ref{sec:preliminaries} introduces the preliminaries, such as the system model.
Section~\ref{sec:pro:hotstuff} describes the \protocolName protocol.
Section~\ref{sec:evaluation} provides numerical evaluations, and, finally, Section~\ref{sec:conclusion} concludes the paper.

\section{Preliminaries}\label{sec:preliminaries}

\subsection{System Model}

We consider a distributed system composed of a fixed set $\Pi$ of $n$ processes (also called servers or replicas), indexed by $i \in [n]$, where $[n] = \{1,\dots,n\}$.
We assume each process has a unique ID, and it is infeasible for a faulty process to obtain additional IDs to launch a \emph{Sybil attack}~\cite{douceur_2002}.
We also assume there is a set of clients, where each client knows all processes.

Processes are subject to Byzantine failures~\cite{lamport_1982}.
We assume that at most $f=\epsilon \cdot n$ processes within~$\Pi$ are faulty.
A process that is not faulty is said to be \textit{correct}. 
We assume a \emph{static corruption adversary} chooses the set of faulty processes at the beginning of execution, and such a set does not change throughout the execution.
Byzantine processes may collude and coordinate their actions.

We consider two communication models: synchronous and partially synchronous~\cite{dwork_1988}.
In the synchronous model, with $\epsilon \in [0,1/2)$, processes execute in fixed-duration rounds, where they collect messages sent in the previous round, do some computation, and disseminate messages to be received at the beginning of the next round.
In the partially synchronous model, with $\epsilon \in [0,1/3)$, the network and processes may operate asynchronously until some \emph{unknown global stabilization time} GST, after which the system becomes synchronous, with \textit{unknown time bounds for communication and computation}.
Besides, we assume that processes have synchronized clocks (like~\cite{abraham2020sync,streamlet2020}); hence, a protocol's execution can proceed in rounds.

Processes communicate by message passing through reliable point-to-point channels, and when needed, can sign messages using digital signatures. 
We assume that the distribution of keys is performed before the system starts. 
At run-time, the private key of a correct process never leaves the process and, therefore, remains unknown to faulty processes. 
We assume the digital signature scheme supports multi-signatures~\cite{boneh2003aggregate}; for instance, BLS~\cite{boneh2001short} or ECDSA~\cite{johnson2001elliptic}.
Formally, we assume the following set of functions to be available to each process~$i\in \Pi$:
\begin{itemize}[leftmargin=1.1em,topsep=1pt, itemsep=1pt]
    \item $\texttt{sign}_i(m) \mapsto \mathit{sig}$ --- sign a message $m$ with process's own secret key;
    \item $\texttt{aggregate}(\mathit{sig}_1, \ldots, \mathit{sig}_k) \mapsto \mathit{sig}$ --- aggregate several signatures into one multi-signature;
    \item $\texttt{validate}(m, \mathit{sig}, [id_1, \ldots, id_k]) \mapsto \textit{bool}$ --- check that $\mathit{sig}$ is an aggregation of signatures of message $m$ by validators with ids $id_1, \ldots, id_k$.
\end{itemize}

We assume that processes have access to a globally known \emph{verifiable random function} (VRF)~\cite{algorand,goldberg22verifiable}, which enables the generation of verifiable random values and provides the following two operations:
\begin{itemize}[leftmargin=1.1em,topsep=1pt, itemsep=1pt]
    \item $\texttt{VRF\_prove}_i(s) \mapsto S, P_i$ --- given a seed~$s$, process~$i$ computes a pseudo-random string~$S \in \{0,1\}^\lambda$ of fixed length~$\lambda$.
    Along with $S$, it returns a proof~$P_i$ that certifies that $S$ was generated correctly by process~$i$ using its VRF key.
    \item $\texttt{VRF\_verify}(s, S, P_i) \mapsto \textit{bool}$ --- given a seed~$s$, a string~$S$, and a proof~$P_i$, this function checks whether $S$ is a valid output of $\texttt{VRF\_prove}_i$ on input~$s$.
\end{itemize}

We assume a \emph{computationally bounded adversary}, i.e., Byzantine processes have a polynomial advantage in computational power over the correct processes.
Accordingly, a Byzantine process cannot forge signatures of correct processes except with negligible probability.
We also assume each process has access to a local, unbiased, independent source of randomness.
We require a cryptographic hash function $\fn{H}$, which maps an arbitrary-length input to a fixed-length output. 
The hash function must be collision resistant~\cite{rogaway_2004}, which informally means that the probability of an adversary producing inputs $m$ and $m'$ such that $\hash{m} = \hash{m'}$ and $m\neq m'$ is negligible.

\subsection{Distributed Ledger}
In a distributed ledger (or blockchain) protocol, each process maintains a local ledger---a log that grows over time.
At any time, a process can designate a prefix of its ledger as committed (or finalized), indicating that this portion is immutable and agreed upon.
We assume that a correct process's ledger never decreases in length.
Any distributed ledger protocol satisfies the following properties (adapted from~\cite{streamlet2020}):
\begin{itemize}[leftmargin=1.1em,topsep=1pt, itemsep=1pt]
    \item \textbf{Safety:}
    If two correct processes commit ledgers $\ledger$ and $\ledger'$, then either $\ledger \preceq \ledger'$ or $\ledger' \preceq \ledger$, where ``$\preceq$'' denotes the prefix relation: that is, one ledger is a prefix or equal to the other.
    \item \textbf{Liveness:} 
    If a correct process receives a value, it will eventually be included in the finalized ledgers of all correct processes.
\end{itemize}
A \textit{probabilistic distributed ledger} guarantees the above properties probabilistically.
In particular, if two correct processes commit ledgers $\ledger$ and $\ledger'$, then
the probability that $\ledger$ is not a prefix of $\ledger'$ and $\ledger'$ is not a prefix of $\ledger$
is bounded and depends on the system's parameters, in particular the security parameter~$\kappa$.
Besides, liveness should hold with probability~$1$.

\ignore{
\begin{table}[!t]
    \centering
    \begin{tabular}{l@{\hspace{1em}}l}
        Symbol & Definition\\
        \hline
        $n$ & Total number of processes in the system\\
        \hline
        $f$ & Number of Byzantine processes\\
        \hline
        $\epsilon$ & Fraction of faulty processes among all processes, i.e., $\epsilon = f/n$\\
        \hline
        $\prVote$ & The probability that a candidate process votes\\
        \hline
        $\prProp$ & \begin{tabular}[c]{@{}l@{}}The probability that a process receives a message from another process\\through propagation\end{tabular}\\
        \hline
         $\prSample$ & The probability that a process is included in a random sample \\
         \hline
         $q$ & The number of votes required to certify a block\\
         \hline
         $\kappa$ & Number of consecutive certified blocks required to commit a block\\
         \hline
    \end{tabular}
    \caption{List of symbols}
    \label{tab:symbols}
\end{table}
}

\section{\protocolName}\label{sec:pro:hotstuff}

In this section, we present \protocolName for implementing a probabilistic distributed ledger, where, in each epoch, there are $O(\sqrt{n})$ processes in expectation, each of which sends an expected $O(\sqrt{n})$ messages, and each of the remaining processes sends $O(1)$ messages in expectation, while still preserving the protocol's safety and liveness properties with high probability.  
This protocol operates under both partially synchronous and synchronous communication models.
Since processes have access to synchronized clocks, execution in either model can be structured in rounds.

\subsection{Overview} 

\protocolName is a leader-based protocol that proceeds in a sequence of epochs, each with a designated leader known to all processes in advance.
Fig.~\ref{fig:exec} depicts an execution of \protocolName.
Each epoch~$e$ has three rounds: \emph{propose}, \emph{disseminate}, and \emph{vote}.
In the first round, the leader of epoch~$e$ selects a random sample (called the \textit{first-layer} random sample) and sends a proposal block (defined below) to its members.
To select a random sample, each process is independently included with probability $\prSample = O(1/\sqrt{n})$; hence, the expected sample size is $O(\sqrt{n})$.
If a process receives the proposal, then at the beginning of the second round, it forwards the proposal to a new random sample (selected like the first-layer random sample).
This reduces the leader's communication overhead by offloading part of the dissemination task to the sample.
These new samples, to which the proposal is forwarded, are referred to as the \emph{second-layer} random samples.

If a process receives a valid proposal during the second round of the epoch, it becomes a \textit{candidate} to vote for the proposal.  
However, if it receives conflicting proposals, it does not become a candidate for any of them.
At the beginning of the third round, called the \emph{vote} round, each candidate tosses a local coin that comes up heads with probability $\prVote = O(\polylog{n}/n)$. 
If it does, the process sends a \textit{vote} message to the leader of epoch~$e+1$.
This coin toss is performed using a VRF to limit the influence of malicious processes.

\begin{figure}[!t]
    \centering
    \includegraphics[scale=1.1]{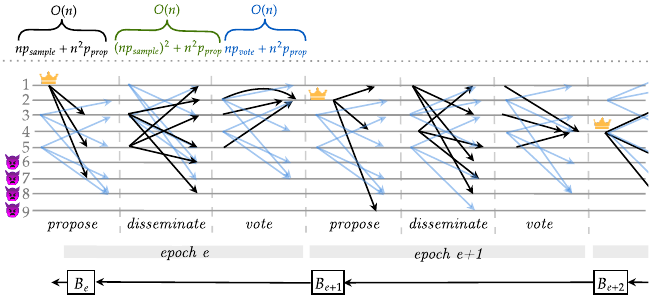}
    \caption{Execution of the \protocolName over two epochs.
    Blue arrows illustrate messages sent by the propagation sub-protocol, while black arrows indicate messages sent during the propose, disseminate, and vote rounds. 
    The value shown at the top of each of the first three rounds indicates the communication complexity incurred in that round.}
    \label{fig:exec}
\end{figure}

Processes employ a block propagation sub-protocol to increase the number of processes that receive the proposals.
Specifically, in each round, every process selects a random sample of processes by including each one with probability~$\prProp$, and forwards the most recent block it knows to the sample (the precise meaning of ``recent'' is defined below); in particular, if it has received the proposal from the leader, it forwards that proposal.
In the best-case, each proposal is propagated over three rounds---the dissemination and vote rounds of an epoch, followed by the proposal round of the next epoch---until a new proposal is received.
    
If the leader of epoch~$e+1$ receives at least $q = O(\polylog{n})$ valid vote messages, it aggregates them into a certificate that serves as proof that the block was approved by a sufficient number of processes.
It then creates a \textit{certified} block, which includes the proposed block along with the corresponding certificate.

Each leader includes two main elements in a proposal:
\begin{enumerate*}[label=(\textit{\alph*})]
\item a newly created block, typically containing a set of unconfirmed pending transactions, 
the height of the block (i.e., its distance from the genesis block~$\genesis$, which is the first block of the ledger), 
and the hash of the most recent certified block~$B$ known to the leader (if block~$B$'s height is $h-1$, then the new block's height is $h$); 
\item the certificate generated for $B$. 
\end{enumerate*}
By including the hash of the most recent certified block in each block, blocks are cryptographically linked, thereby forming a \textit{chain} of blocks.
A block~$B'$ is called an \textit{ancestor} of a block~$B$ if, by repeatedly following hash links, $B$ can be reached from $B'$.

Upon receiving a proposal containing a valid certificate for a block~$B$, a process certifies $B$ if it has already received $B$, and then adds it to its local set of certified blocks.
When the set contains at least $\kappa \ge 1$ certified blocks $B_{\ell},B_{\ell+1},\dots,B_{\ell+\kappa-1}$ that 
(1) are proposed in consecutive epochs,
(2) form a chain that extends back to the genesis block, and
(3) have consecutive heights $\ell,\dots,\ell+\kappa-1$, then block~$B_{\ell}$, together with all of its ancestors up to the genesis block, is considered committed.

The parameter~$\kappa$ controls the probability of ensuring safety and also impacts liveness.
Notably, $\kappa$ is not a hard-coded parameter of the protocol; instead, it can be chosen at runtime by each client. 
In fact, different clients may use different values for~$\kappa$.
For example, one client may choose to wait for five certified blocks before considering a block committed, while another---requiring higher confidence in safety---may wait for more.
This flexible use of~$\kappa$ allows each client to balance safety and latency according to their own needs and increases the probability of safety as blocks become more deeply buried in the ledger, similarly to what is done in Bitcoin's Nakamoto consensus~\cite{nakamoto_2008}.

\subsection{Protocol}

\noindent\textbf{Main data structures and auxiliary functions.}
Algorithm~\ref{alg:main:protocol:processes} describes the main data structures and auxiliary functions used throughout the protocol. 
We use the term \textit{block} to refer to each element in the ledger. 
Each block~$B$ is represented as $(e,\textit{txs},\parentHash,\height)$, where 
$e$ is the epoch in which the block was proposed, 
$\textit{txs}$ is the set of pending transactions included in the block, 
$\parentHash$ is the hash of the parent block, 
and $\height$ is the height of the block.

\begin{algorithm}[t!]
\caption{\protocolName (data structures and auxiliary functions)---process~$i$.}
\label{alg:main:protocol:processes}
\begin{algorithmic}[1]

\STATEx{\hspace{-1.5em}\textbf{struct} block: $\epoch$, $\txs$, $\parentHash$, $\height$}
\STATEx{\hspace{-1.5em}\textbf{struct} certified block: $\block$, $\cert$}

\STATEx{\hspace{-1.5em}\textbf{local variables}}
    \STATE{$\curEpoch \gets 0; \curRound \gets 0; \mathit{proposals} \gets \emptyset$; $\mathit{votes} \gets \emptyset; \var{voted}\gets \var{false}$}
    \STATEx{$\mathit{certified\_blocks} \gets \{ \genesis \}$}

\STATEx{\hspace{-1.5em}\textbf{function} $\fn{get\_sample}(\var{prob},\var{seed})$}
    \STATE{$S,P \gets \fn{VRF\_prove}_i(\var{seed})$}
    \STATE{return $\{j \in \Pi \mid \texttt{local\_coin}(S||j, \var{prob}) = 1\}, S,P$}

\STATEx{\hspace{-1.5em}\textbf{function} $\fn{valid\_proposal}(\langle \textsc{propose},B=(e,\textit{txs},\parentHash,h), (\var{sig},Q),, \rangle_\ell)$}
\STATE{return $
\texttt{valid}(\textit{txs})
\land \texttt{leader}(e) = \ell \land
\bm{(}
\exists m=\langle ,B'=(,,,h-1),,, \rangle_* \in \var{proposals},$}
\STATEx{$\hash{B'} = \parentHash \land \fn{validate}(m,\var{sig},Q) \bm{)}$} 

\STATEx{\hspace{-1.5em}\textbf{function} $\fn{create\_cert}(e)$}\label{line:ph:agr:start}
\IF{$\exists p= \langle,(e-1,,,),,, \rangle_* \in \var{proposals}, \big|Q=\{ j \mid \langle,,\hash{p},\var{sig}_j \rangle \in \textit{votes} \}\big| \ge q$}
\STATE{$\var{sig} \gets \fn{aggregate}(\{ \var{sig}_j \mid \langle ,,, \var{sig}_j \rangle \in \var{votes} \})$}
\STATE{$\var{certified\_blocks} \gets \var{certified\_blocks} \cup \{ (B,(\var{sig},Q)) \}$}
\ENDIF

\STATEx{\hspace{-1.5em}\textbf{function} $\fn{can\_disseminate}(\langle ,(e,,,),,S,P_\ell\rangle_\ell)$}
\STATE{return $\texttt{VRF\_verify}(e||\text{``propose''}, S, P_\ell) \land \fn{local\_coin}(S||i,\prSample)$}

\STATEx{\hspace{-1.5em}\textbf{function} $\fn{can\_vote}(B=\langle,(e,,,h),,,\rangle_*)$}
\STATE{$S,P \gets \fn{VRF\_prove}_i(e)$}
\STATE{$\var{flag} \gets \fn{local\_coin}(\prVote,S)
        \land \bm{(}\nexists ((,,,h'),) \in \var{certified\_blocks}, h' \ge h\bm{)} \ \land$}
        \STATEx{$\var{voted}=\var{false} \ \land$ all predecessors of $B$ are certified}\label{line:can:vote}
\STATE{return $\var{flag}, S, P$}

\STATEx{\hspace{-1.5em}\textbf{function} $\fn{try\_to\_certify}()$}
\FOR{each $\langle ,(,,\var{hash},h),\cert,,\rangle_* \in \var{proposals}$}
\IF{$\exists \langle ,B=(,,\parentHash,h-1),,,\rangle_* \in \var{proposals}, \hash{B} = \var{hash}$}
\STATE{$\var{certified\_blocks} \gets \var{certified\_blocks} \cup \{ (B,\cert) \}$}
\ENDIF
\ENDFOR

\vspace{-0.3em}
\STATEx{\hspace{-1.5em}\textbf{function} $\fn{get\_ledger}(\kappa)$} \hfill {(\color{blue}callable by a client)}
\IF{$\exists \{B^1,\dots,B^k, B_{1}, \dots, B_{\kappa}\} \subseteq \var{certified\_blocks},$        
    $\genesis,B^1,\dots,B^k, B_{1}, \dots, B_{\kappa}$ form a chain 
    and $B_{1}, \dots, B_{\kappa}$ were proposed in consecutive epochs and $\qquad\qquad\qquad$
    $(\nexists B \in \var{certified\_blocks}, B.\txs \neq B_1.\txs \land B.\height=B_1.\height)$}
    \label{line:try:to:commit:if}
    \STATE{return $\{\genesis, B^1,\dots,B^k, B_{1}, \dots, B_{\kappa}\}$}
\ENDIF

\newcounter{savedALGline}
\makeatletter
\setcounter{savedALGline}{\value{ALG@line}}
\makeatother

\end{algorithmic}
\end{algorithm}

A block becomes \emph{certified} when it is accompanied by a sufficient number of signatures (i.e., from at least $q$ processes), which are aggregated and stored alongside the block.
Certified blocks are used to ensure safety and to prevent equivocation by leaders.

The algorithm defines the following local variables maintained by each process:
a set of certified blocks (\textit{certified\_blocks}); 
sets of received proposals (\textit{proposals}) and votes (\textit{votes}); 
two variables storing the current epoch (\curEpoch) and the current round (\curRound); and
a variable (\textit{voted}) indicating whether the process has already voted in the current epoch.

We assume there is a deterministic function~$\fn{leader}$ that returns the leader of each epoch~$e \ge 1$.
We further assume a function \fn{valid} is available to verify the validity of transactions.
The algorithm also defines several auxiliary functions to generate random samples, select leaders, validate blocks and proposals, certify blocks, and return the committed blocks (i.e., the ledger).
Since the behavior of most of these functions is evident from the algorithm, we describe only the two particularly relevant to our protocol.

The \fn{get\_sample} function takes as input a probability~\var{prob} and a seed~\var{seed} and relies on another function, \fn{local\_coin}, which outputs a boolean value given a seed and a probability.
Using the seed, it first generates a pseudo-random string~$S$ through a VRF.
Then, for each process~$j$, it includes~$j$ in the sample if \fn{local\_coin}, invoked with seed~$S||j$ and probability~\var{prob}, returns~$1$.
Finally, it returns the random sample together with $S$ and its associated proof.

The \fn{get\_ledger} function takes an input parameter $\kappa \geq 2$.
Each client can call this function, possibly with a different value of $\kappa$.
It returns a chain of certified blocks starting from the genesis block and extending up to a block $B_1$, where:
\begin{enumerate*}[label=(\arabic*)]
    \item $B_1$ is followed by $\kappa-1$ certified blocks $B_2, \dots, B_{\kappa}$,
    \item the blocks $B_1, \dots, B_{\kappa}$ are proposed in consecutive epochs and have consecutive heights, and
    \item there exists no other block $B$ such that $B.\txs \neq B_1.\txs$ and $B.\height = B_1.\height$.
\end{enumerate*} 
The client considers this chain as final (note that not every certified block is necessarily included in this chain). 

\noindent\textbf{Main protocol.}
Algorithm~\ref{alg:pro:hotstuff} implements \protocolName.
The protocol proceeds in a sequence of epochs, each with three rounds.
At the beginning of each round~$r$, the local variables $\curEpoch$ and $\curRound$ are updated: $\curEpoch$ is set to $\lceil r / 3 \rceil$, and $\curRound$ is set to $r$ (line~\ref{line:set:cur:e:r}).

\begin{algorithm}[!hbt]
\caption{\protocolName (main protocol)---process $i$.}
\label{alg:pro:hotstuff}
\begin{algorithmic}[1]

\makeatletter
\setcounter{ALG@line}{\value{savedALGline}}
\makeatother

\STATEx{\hspace{-1.5em}\textbf{task} $\texttt{main\_loop}()$}
    \FOR{$r \in \{1,2,\dots\}$}
    \STATE{$\curEpoch \gets \lceil r/3 \rceil${;} $\curRound \gets r; \var{voted} \gets \var{false}$}\label{line:set:cur:e:r}
    \STATE{$\propagate$; $\texttt{try\_to\_certify}()$}
    \IF{$r \bmod 3 = 1$} \hfill {(\color{blue}``propose'' phase)}
        \IF{$\texttt{leader}(\curEpoch) = i$}\label{line:if:leader}
            \STATE{$\texttt{create\_cert}(\curEpoch)$}\label{line:try:certify}
            \STATE{$N \gets \text{the certified block with the maximum height}$}\label{line:N}
            \STATE{$B \gets (\curEpoch, \fn{get\_txs}(), \hash{N.\block},N.\block.\height+1)$}\label{line:create:block}
            \STATE{$\var{sample},S,P \gets \fn{get\_sample}(\prSample,\curEpoch||\text{``propose''})$}\label{line:ph:leader:sample}
            \STATE{$\forall j \in \var{sample} \cup \{\fn{leader}(e+1)\}$, send $\langle \textsc{propose}, B, N.\cert,S,P\rangle_i$ to $j$}\label{line:ph:propose}
        \ENDIF\label{line:ph:agr:end}
    \ELSIF{$r \bmod 3 = 2$}\label{line:if:2} \hfill ({\color{blue}``disseminate'' phase})
    \IF{$\exists m = \langle, (\curEpoch,,,) ,,S,P \rangle_\ell \in \textit{proposals}, \fn{can\_disseminate}(m)$}
        \STATE{$\var{sample},\_,\_ \gets \fn{get\_sample}(\prSample,\curEpoch||\text{``disseminate''})$}
        \STATE{$\forall j \in \var{sample}\cup \{\fn{leader}(e+1)\}$, send $m$ to $j$}
    \ENDIF\label{line:end:if:2}
    \ELSIF{$r \bmod 3 = 0$} \hfill ({\color{blue}``vote'' phase})
        \IF{$\exists m = \langle, (\curEpoch,,,) ,,, \rangle_* \in \textit{proposals}, \fn{can\_vote}(m)$}\label{line:vote:diss:2:start}
            \STATE{$\var{sig} \gets \fn{sign}(m); \var{voted}\gets\var{true}$}
            \STATE{send $\langle \textsc{vote}, \curEpoch,\hash{m}, \var{sig}\rangle$ to $\texttt{leader}(\curEpoch+1)$}
        \ENDIF\label{line:vote:diss:2:end}
    \ENDIF
    \ENDFOR
\vspace{-0.3em}
\STATEx{\hspace{-1.5em}\textbf{upon} receiving $m = \langle \textsc{propose},,,,\rangle_{\_}$ from $j$}

\STATE{\textbf{if} $\fn{valid\_proposal}(m)
    $ \textbf{then} $\textit{proposals} \gets \textit{proposals} \cup \{ m \}$}\label{line:receive:propose}
    
\STATEx{\hspace{-1.5em}\textbf{upon} receiving $m = \langle \textsc{vote},e,\var{hash},\rangle$ from $j$}
\IF{$e = \curEpoch 
    \land \texttt{leader}(e+1) = i 
    \land 
    (\exists p \in \var{proposals}, \hash{p} = \var{hash})$}
    \label{line:receive:vote}
    \STATE{$\textit{votes} \gets \textit{votes} \cup \{m\}$}\label{line:votes:update}
\ENDIF
\vspace{-0.5em}
\STATEx{\hspace{-1.5em}\textbf{task} $\propagate$}
    \STATE{$m \gets$ the proposal with the greatest height in \textit{proposals}}\label{line:vote:diss:1:start}
    \STATE{$\var{sample},\_,\_ \gets \fn{get\_sample}(\prProp,\curRound)$} 
    \STATE{$\forall j \in \textit{sample}$, send $m$ to $j$}\label{line:vote:diss:1:end}

\end{algorithmic}
\end{algorithm}

\begin{list}{}{\leftmargin=0em
   \itemindent=0em
   \topsep=1pt     % space before and after the list
   \partopsep=1pt  % extra space when starting a new paragraph
   \parsep=0pt     % space between paragraphs within an item
   \itemsep=0pt}   % space between items
    \item[]\textbf{Propose} ($r \bmod 3 = 1$).
    If process~$i$ is the leader of the current epoch, it first attempts to certify a block using \texttt{try\_to\_certify}, based on the vote messages received for the value proposed by the leader of the previous epoch (lines~\ref{line:if:leader}-\ref{line:try:certify}).
    Then, it selects the certified block~$N$ with the greatest height (line~\ref{line:N}).
    Using these, it constructs a new block~$B$ with the current epoch number, a batch of new transactions, the hash of the block of $N$, and a height one greater than that of the block of $N$ (line~\ref{line:create:block}).  
    Process~$i$ then samples a subset of processes using \texttt{get\_sample} and sends the proposal $\langle \textsc{propose}, B, N.\cert, S,P \rangle_i$ to all sampled processes and the leader of the next epoch, where $S$ and $P$ are a pseudo-random string and its associated proof generated by the VRF  (lines~\ref{line:ph:leader:sample}-\ref{line:ph:propose}).
    Such a pseudo-random string allows each process to locally verify whether it belongs to the random sample selected by $i$ in the next round.

    Upon receiving a proposal from process~$i$, process~$j$ updates its local variables if the proposal satisfies the \fn{valid\_proposal} predicate, which requires all of the following conditions to hold (line~\ref{line:receive:propose}):
    \begin{enumerate*}[label=(\alph*)]
        \item the transaction included in the proposal satisfies the $\fn{valid}$ predicate,
        \item the leader of epoch~$e$ is $i$, and 
        \item the proposal extends a certified block.
    \end{enumerate*}
    If these conditions hold, then $j$ 
    adds the proposal to the local set $\var{proposals}$.
    \item[]\textbf{Disseminate} ($r \bmod 3 = 2$).
    If a process receives a proposal during the first round of the current epoch and stores it in \textit{proposals}, then at the beginning of the second round it checks whether it should forward the proposal using the \fn{can\_disseminate} function;
    if so, it selects a new random subset of processes and forwards the proposal to them (lines~\ref{line:if:2}–\ref{line:end:if:2}).
    In the \fn{can\_disseminate} function, the process first verifies the string included in the proposal with \fn{VRF\_verify}, and then checks whether it belongs to the random sample selected by the leader. 
    This step is necessary because a Byzantine process might attempt to send the proposal to all correct processes, instead of a random sample.
    With this verification, even if all correct processes receive the proposal, only those belonging to the random sample will disseminate it.
    \item[]\textbf{Vote} ($r \bmod 3 = 0$).
    This round serves to initiate the voting procedure required for certifying a block.
    Specifically, each process takes the following action:   
    with probability $\prVote$, if the process is a candidate (i.e., it has received a valid proposal from the leader of the current epoch), it sends a vote message to the leader of the next epoch (lines~\ref{line:vote:diss:2:start}-\ref{line:vote:diss:2:end}). 

    When a process receives a \textsc{vote} message for a block~$B$, it verifies the following conditions:
    \begin{enumerate*}[label=(\alph*)]
        \item block~$B$ was proposed in the current epoch,
        \item the process is the leader of the next epoch, and
        \item the process has received $B$ (i.e., it has stored $B$ in \var{proposals}.)
    \end{enumerate*}
    If all conditions are satisfied, the process stores the vote in the local set~\var{votes} (lines~\ref{line:receive:vote}-\ref{line:votes:update}).
\end{list}

To further propagate the blocks, each process takes the following action in each round:
it selects a random sample of processes by including each one with probability $\prProp$, and then forwards the most recent proposal it knows (i.e., the one with the greatest observed height) to that random sample (lines~\ref{line:vote:diss:1:start}–\ref{line:vote:diss:1:end}).

\subsection{Complexity Analysis}
Here, we analyze the round, message, and communication complexity of \protocolName for block generation in the best-case scenario.
We assume that $\prSample = O(1/\sqrt{n})$, $\prVote = O(\polylog{n}/n)$, and $\prProp = O(1/n)$.
With these parameters, in each epoch, the leader and the members of the first-layer random sample send an expected $O(\sqrt{n})$ messages; besides, each remaining process sends $O(1)$ messages in expectation.
Observe that, during an epoch, the expected number of sent messages is: 
\begin{align*}
    \underbrace{n\prSample}_{\textit{propose}}
    + \underbrace{(n\prSample)^2}_{\textit{disseminate}}
    + \underbrace{n\prVote}_{\textit{vote}}
    + \underbrace{3 n^2\prProp}_{\textit{propagation}}
    = O(n).
\end{align*}
Accordingly, the expected total message complexity per block is $O(\kappa\,n)$. 
Observe that the average number of messages sent per process per epoch is $O(1)$.
Recall that we assume the use of multi-signatures.  
Since $\prVote=(\polylog{n}/n)$, the number of votes required to certify a block is $q=O(\polylog{n})$.
Hence, the expected per-block communication complexity is $O\bm{(}\kappa\,n \cdot |\text{multi-signature}|\bm{)} = 
O\bm{(}\kappa\,n \cdot \polylog{n} \bm{)} =
\widetilde{O}(\kappa\,n)$.

In a given epoch, each process becomes the leader with probability $1/n$. 
When selected as a leader, it sends $n\prSample$ messages in expectation during the propose phase.
Additionally, with probability $\prVote$, it sends a single message to the next leader, and it sends $n\prProp$ messages in expectation in each round.
Thus, the amortized per-block per-process communication complexity is given by:
$
    \kappa \cdot \bm{(}(1/n)\cdot O(\sqrt{n}) + \prVote\cdot 1 + 3\cdot O(1) \bm{)} \cdot |\text{multi-signature}| = \widetilde{O}(\kappa) 
$.

\medskip
\noindent\textbf{Discussion: improving the probability of block certification.}  
Suppose the leader of epoch~$e$ is Byzantine, and assume that at least $q$ processes have voted for block~$B$ proposed in epoch~$e-1$.
The leader of epoch~$e$ may remain silent and thus prevent block~$B$ from being certified.
However, with a minor modification of the protocol, we can improve the probability of certifying block~$B$:
whenever a process sends its vote to the leader, it also forwards the vote to a random sample of processes (the same random sample is selected by all processes for a given epoch).
In this way, correct members of the random sample can act on behalf of a Byzantine leader.
Note that if the random sample has size $O(\sqrt{n})$, the asymptotic order of the protocol's communication complexity remains the same.
Moreover, with high probability, the sample contains at least one correct process. 
Hence, even if the leader is Byzantine, the voters' work will not be lost with high probability.

\subsection{Correctness Proofs}
Here, we outline the main arguments for proving the correctness of \protocolName.
The full proofs are provided in the appendix.

\noindent
\textbf{Safety under partial synchrony.}
The safety analysis under partial synchrony relies on quorum intersection.
Specifically, we configure the protocol parameters so that, in any epoch~$e$, if fewer than $(n+f)/2$ processes become candidates (i.e., receive the proposal by the end of the second round of epoch~$e$), then the proposal becomes certified with small probability.
Conversely, if the number of processes becoming candidates $\gg (n+f)/2$, the proposal may be certified with high probability.

Note that once a correct process becomes a candidate to vote for a proposal, it locks on the certified block in that proposal and will never vote for any block with height less than or equal to the locked block.
Consider three blocks~$B_1$, $B_2$, and~$B_1'$, such that 
$B_1$ is certified, 
$B_2$ extends $B_1$, 
$B_1.\height = B_1'.\height$, 
$B_1.\txs \neq B_1'.\txs$, and
$B_2$ is proposed in an epoch less than or equal to that of $B_1'$.
Since certifying a block with not small probability requires more than $(n+f)/2$ candidates, it follows that more than $(n+f)/2$ processes must have become candidates for $B_2$.
Similarly, if $B_1'$ becomes certified with not small probability, it needs more than $(n+f)/2$ candidates.
However, any two sets of sizes $(n+f)/2 + \chi_1$ and $(n+f)/2 + \chi_2$ intersect in at least $\chi_1+\chi_2$ correct processes.
These $\chi_1+\chi_2$ correct processes, already locked on $B_1$, will cancel their candidacy for $B_1'$.
Consequently, the number of candidates for $B_1'$ drops below $(n+f)/2$, and therefore the probability of certifying $B_1'$ becomes small.
If there exists a sequence of $\kappa$ consecutive blocks $B_1,\dots,B_{\kappa}$, then the small probability of certifying two conflicting blocks at the same height decreases as $\kappa$ grows, and for sufficiently large $\kappa$ this probability becomes negligible.

\noindent
\textbf{Safety under synchrony.}
The safety analysis under synchrony relies on the efficiency of the block propagation sub-protocol and on the guarantee that, for every certified block, at least one correct process has become a candidate to vote for that block with high probability.
Particularly, we show that if at most $f$ processes become candidates to vote for a block, then the block can be certified only with small probability.
Conversely, if the number of processes becoming candidates $\gg f$, then the block may be certified with high probability.
We further compute the probability that all correct processes receive a block once it begins propagating from a correct process.
Now suppose there exists a sequence of $\kappa$ consecutive blocks 
$B_1, \dots, B_\kappa$. 
Consider another sequence of $\kappa$ consecutive blocks 
$B'_1, \dots, B'_\kappa$ 
such that $B_1.\text{height} = B'_1.\text{height}$. 
Without loss of generality, assume $B_1$ is proposed in an epoch less than or equal to that of $B_1'$.
Note that if a correct process becomes a candidate for block~$B_2$, it must have already received block~$B_1$. 
By the propagation sub-protocol, this process disseminates the certified block~$B_1$ to all other correct processes. 
Recall that a correct process commits a block~$B_1'$ only if it has not received another block of the same height (line~\ref{line:try:to:commit:if}).
Therefore, a correct process can commit~$B_1'$ only if the propagation of~$B_1$ fails to deliver it to that process. 

\noindent
\textbf{Liveness.}
By assumption, processes have synchronized clocks, enabling the protocol to advance in rounds even under the partially synchronous model. 
Because leaders are predetermined and each correct process votes in an epoch only for the block proposed by that epoch's leader, the liveness analysis is the same in both the synchronous and partially synchronous settings.

We compute the probability of committing a block after $\kappa \ge 1$ epochs in the best-case scenario.
Specifically, given $\kappa+1$ consecutive correct leaders, we determine the probability that the block proposed by the first leader is committed.
To this end, we proceed with the following steps.
(1) For certifying the $1$st block, suppose the first leader proposes a block $B_1$. 
We compute the probability that the second leader certifies $B_1$.
(2) For certifying any block up to the $\kappa$th, suppose the $\ell$th leader ($\ell > 1$) certifies $B_{\ell-1}$ and proposes a block $B_\ell$, that extends $B_{\ell-1}$. 
Note that a correct process votes for this proposal only if it has received the certified blocks $B_1,B_2,\dots,B_{\ell-1}$.
We compute the probability that a correct process receives the certified blocks $B_1,B_2,\dots,B_{\ell-1}$ to be able to participate in voting.
Based on this, we compute the probability that the last leader certifies $B_{\kappa}$ and enables correct processes to commit~$B_1$.

\section{Evaluation}
\label{sec:evaluation}

In this section, we present concrete probabilities for the safety and liveness guarantees of our protocol under partial synchrony.
The evaluation under synchrony, where much better security is achieved, is presented in Appendix~\ref{sec:evaluation:sync}.
We also compare our design with fixed-committee designs.

\ignore{
\noindent\textbf{Synchronous setting.}
We set the parameters as follows:
$n = 500$,
$q = 49$, 
$\prSample = 3/\sqrt{n}$, 
$\prVote = 1.9q/n$, and
$\prProp = 10/n$.
Fig.~\ref{fig:safety-violation} shows the log-scaled probability of a safety violation as a function of~$\kappa$ for three different values of the fault ratio: $f/n = \epsilon \in \{0.2, 0.3, 0.4\}$.
As expected, the probability of a safety violation exponentially decreases as~$\kappa$ increases, with faster decay for lower values of~$\epsilon$.
The results show that even for relatively high fault ratios (e.g., $\epsilon = 0.4$), a moderate value of~$\kappa$ (e.g., $\kappa = 7$) suffices to reduce the probability of a safety violation below $2^{-30}$. 
Besides, Fig.~\ref{fig:liveness} illustrates the probability of committing in $\kappa$ epochs in the best-case scenario, for the same fault ratios $\epsilon \in \{0.2, 0.3, 0.4\}$. 
As~$\kappa$ increases, the probability of committing in $\kappa$ epochs decreases. 

\begin{figure}[!t]
    \centering
    \begin{subfigure}[t]{0.48\textwidth}
        \centering
        \includegraphics[scale=0.46]{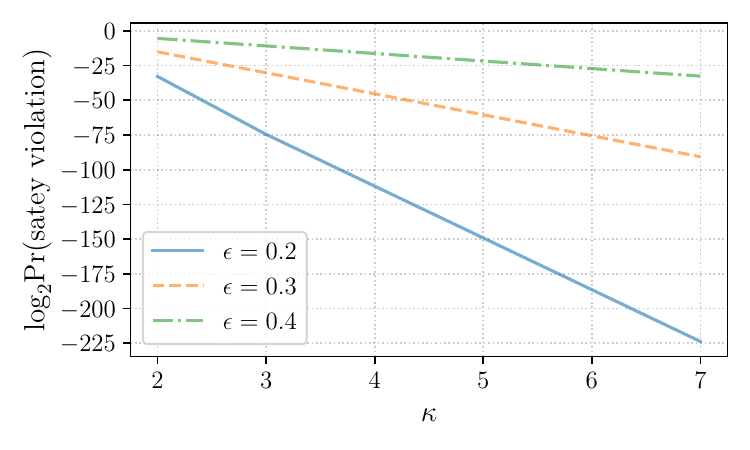}
        \caption{Probability of safety violation vs. $\kappa$. 
        }
        \label{fig:safety-violation}
    \end{subfigure}
    \hfill
    \begin{subfigure}[t]{0.5\textwidth}
        \centering
        \includegraphics[scale=0.46]{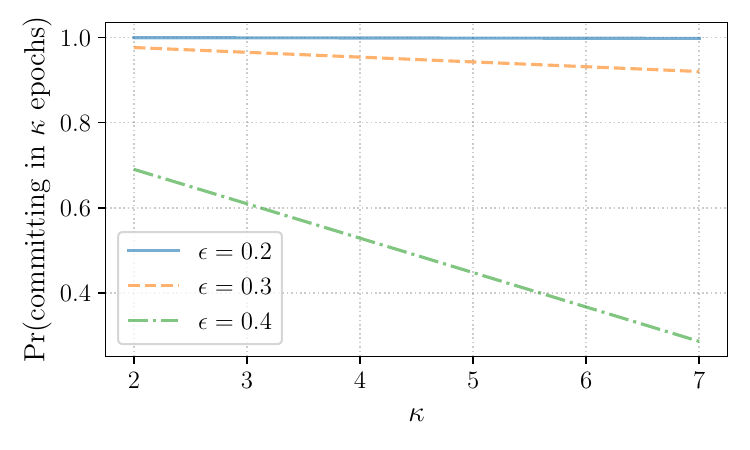}
        \caption{Probability of committing in $\kappa$ epochs.}
        \label{fig:liveness}
    \end{subfigure}
    \caption{Evaluation under synchrony with $n=500$.}
    \label{fig:main}
\end{figure}
}

\begin{figure}[!t]
    \centering
    \includegraphics[scale=0.53]{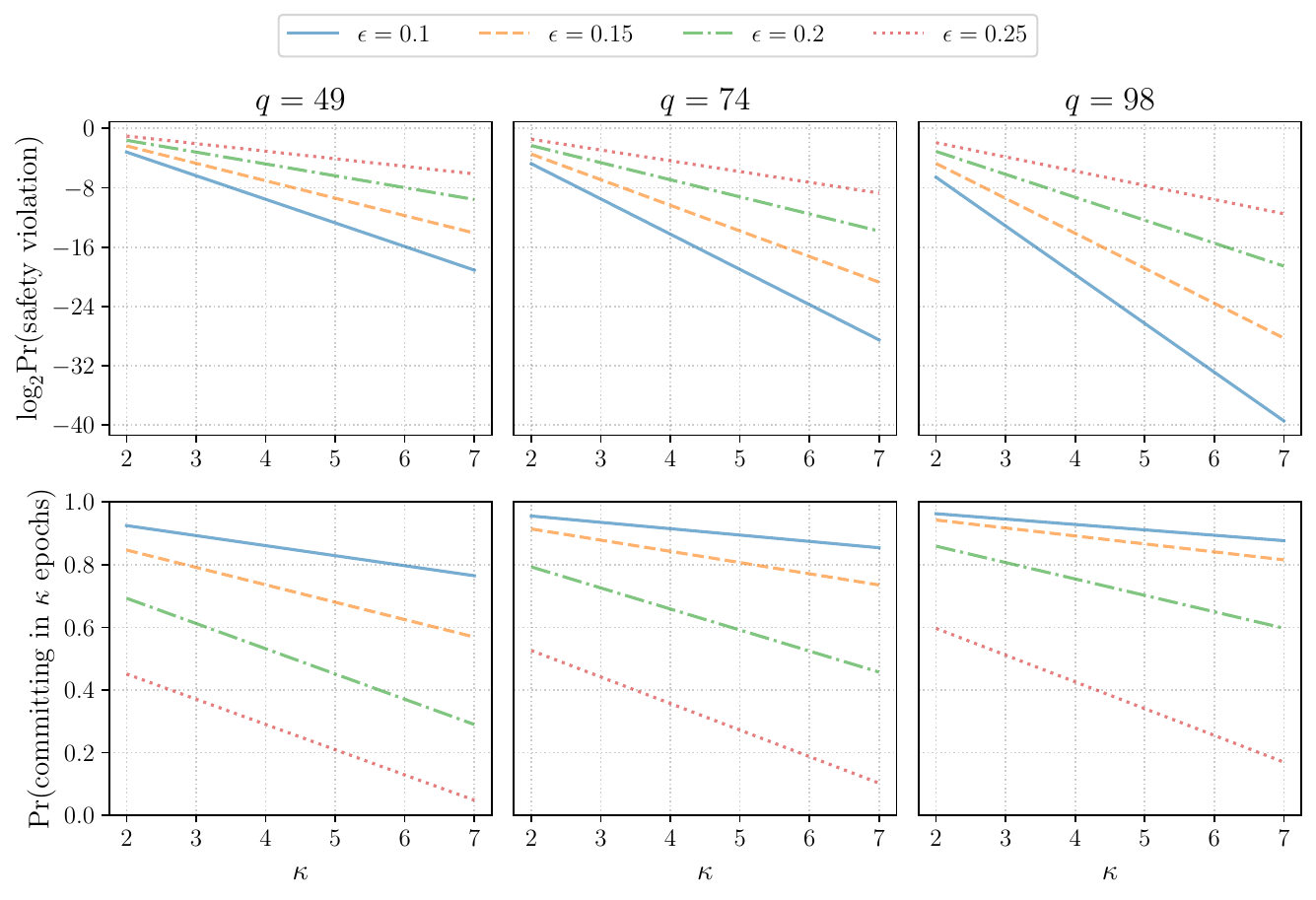}
    \caption{The probability of safety violation (top) and of committing in $\kappa$ epochs (bottom) under partial synchrony with $n=500$.}
    \label{fig:psync:3q}
\end{figure}

We consider three values of $q$: $49$, $74$, and $98$.
The other parameters are as follows:
$n=500$,
$\prSample = 3/\sqrt{n}$,  
$\prVote = 1.45q/n$, and
$\prProp = 6/n$.
Fig.~\ref{fig:psync:3q} shows the log-scaled probability of a safety violation under partial synchrony as a function of~$\kappa$ for four different values of the fault ratio: $f/n = \epsilon \in \{0.1, 0.15, 0.2, 0.25\}$.
The results show that the safety violation probability decreases faster with bigger quorums.
Besides, the probability of committing in $\kappa$ epochs gets better with bigger quorums. 

Table~\ref{tab:kappa-values} shows the expected number of messages exchanged and the communication bits per epoch, as well as the smallest value of $\kappa$ that ensures the required safety for the three quorum sizes we consider, for two values of $\epsilon$.
We consider transactions of size $250$ bytes and employ BLS multi-signatures.
Although larger values of $q$ increase the likelihood of ensuring safety and timely commitment, they come at the cost of violating the sub-linear per-process communication objective of \protocolName, as shown in the table.

Note that these results consider a network where partitions can happen and the adversary can manipulate message scheduling.
If the network behaves well (synchronous case), or the network adversary is limited (as in~\cite{probft}), the probabilities are much better.
In any case, given an expected $\epsilon$, one decides the target number of bits of security (for ensuring safety) and picks the most appropriate quorum size and expected $\kappa$ (these values can be changed on each epoch).
Then, the protocol will make progress when the fraction of actual failures in the system is small (possibly $\ll \epsilon$), but progress will become less frequent as the fraction of compromised processes increases, depending on the actual $\kappa$ clients use.

\begin{table}[!t]
\centering
\caption{Communication per epoch and values of $\kappa$ for various target security.}
\label{tab:kappa-values}
\begin{tabular}{@{}c@{\hspace{0.5em}}c@{\hspace{0.5em}}l@{\hspace{0.5em}}l@{\hspace{0.5em}}l@{\hspace{0.75em}}l@{\hspace{0.75em}}l@{\hspace{0.75em}}l@{\hspace{0.75em}}l@{}}
\toprule
\vspace{-0.2em}
         \multirow{2}{*}{\shortstack[c]{Quorum\\ size}}
         & \multirow{2}{*}{\shortstack[c]{A leader's comm.\\per epoch}}
         & \multirow{2}{*}{\shortstack[c]{Total messages/\\ comm. per epoch}} 
         & \multicolumn{3}{c}{Safety $(\epsilon = 0.1)$} & 
         \multicolumn{3}{c}{Safety $(\epsilon = 0.15)$} 
         \\ \cmidrule(lr){4-6} \cmidrule(lr){7-9} 
         & & & $2^{-10}$ & $2^{-20}$ & $2^{-30}$ & $2^{-10}$ & $2^{-20}$ & $2^{-30}$ 
\vspace{-0.2em}
\\ \midrule       
\vspace{-0.2em}
$q = 49$ & $37\,\mathrm{kb}$ & $13640$ / $3740\,\mathrm{kb}$ & $\kappa = 5$ & $\kappa = 9$ & $\kappa = 13$ & $\kappa=7$ & $\kappa=13$ & $\kappa=18$

\\ \midrule
\vspace{-0.2em}
$q = 74$ & $39\,\mathrm{kb}$ & $13678$ / $3742\,\mathrm{kb}$ & $\kappa = 3$ & $\kappa = 5$ & $\kappa = 7$ & $\kappa=4$ & $\kappa=7$ & $\kappa=9$

\\ \midrule
\vspace{-0.2em}
$q = 98$ & $41\,\mathrm{kb}$ & $13715$  / $3745\,\mathrm{kb}$ & $\kappa = 3$ & $\kappa = 4$ & $\kappa = 5$ & $\kappa=3$ & $\kappa=5$ & $\kappa=6$

\\ \bottomrule
\end{tabular}
\end{table}

\smallskip
\noindent\textbf{Comparing with fixed committee designs.} 
We also compared \protocolName with an alternative design in which processes run the protocol with a static committee of size $c$ instead of selecting different samples of $O(\sqrt{n})$ for each step.
This design has $O(c)$ per-process message and communication complexities.
The results, fully explained in Appendix~\ref{app:static:commitee}, show that these committees have to contain 65\% of the processes to have a safety violation probability smaller than $2^{-30}$ when $n=500$ and $f=200$.
In this setting, our protocol requires only $65\%$ of the communication required if PBFT were run in this committee to achieve similar security under synchrony.
For larger values of $n$, $c/n$ decreases, still requiring each process to send significantly more messages than in our protocol.

\section{Conclusion}\label{sec:conclusion}

We introduced \protocolName for moderate-scale systems, a leader-based protocol that operates under both synchrony and partial synchrony, and achieves expected $\widetilde{O}(\kappa\,\sqrt{n})$ per-process communication, $\widetilde{O}(\kappa\,n)$ total communication, and $O(\kappa)$ best-case latency, while preserving safety and liveness with high probability.
We proved its complexity bounds and computed the probabilities of ensuring safety and liveness under both synchrony and partial synchrony. 
\protocolName bridges the gap between protocols optimized for small‐scale deployments---which suffer from high communication complexity when scaled to hundreds or thousands of processes---and committee-based protocols designed for large-scale systems; when deployed in moderate-scale settings, these committee-based protocols require committees nearly as large as the entire system to maintain safety and liveness with high probability, resulting in poor performance.

\newpage
\bibliography{ref}

\newpage\clearpage
\appendix

\section{Evaluation under Synchrony}
\label{sec:evaluation:sync}

In this section, we present concrete numbers for the probabilities of safety and liveness guarantees of our protocol under synchrony.  

We set the parameters as follows:
$n = 500$,
$q = 49$, 
$\prSample = 3/\sqrt{n}$, 
$\prVote = 1.9q/n$, and
$\prProp = 10/n$.
Fig.~\ref{fig:safety-violation} shows the log-scaled probability of a safety violation as a function of~$\kappa$ for three different values of the fault ratio: $f/n = \epsilon \in \{0.2, 0.3, 0.4\}$.
As expected, the probability of a safety violation exponentially decreases as~$\kappa$ increases, with faster decay for lower values of~$\epsilon$.
The results show that even for relatively high fault ratios (e.g., $\epsilon = 0.4$), a moderate value of~$\kappa$ (e.g., $\kappa = 7$) suffices to reduce the probability of a safety violation below $2^{-30}$. 
Besides, Fig.~\ref{fig:liveness} illustrates the probability of committing in $\kappa$ epochs in the best-case scenario, for the same fault ratios $\epsilon \in \{0.2, 0.3, 0.4\}$. 
As~$\kappa$ increases, the probability of committing in $\kappa$ epochs decreases. 

\begin{figure}[hbt]
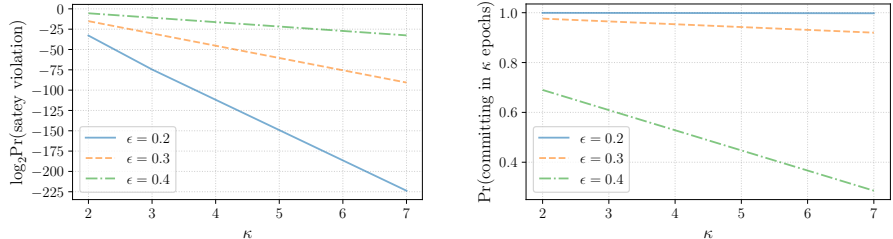

    \centering
    \begin{subfigure}[t]{0.48\textwidth}
        \centering
        \includegraphics[scale=0.46]{figures/500-safety-sync.pdf}
        \caption{Probability of safety violation vs. $\kappa$. 
        }
        \label{fig:safety-violation}
    \end{subfigure}
    \hfill
    \begin{subfigure}[t]{0.5\textwidth}
        \centering
        \includegraphics[scale=0.46]{figures/500-liveness-sync.pdf}
        \caption{Probability of committing in $\kappa$ epochs.}
        \label{fig:liveness}
    \end{subfigure}
    \caption{Evaluation under synchrony with $n=500$.}
    \label{fig:main}
\end{figure}

\section{Static Committee}
\label{app:static:commitee}

In this section, we analyze an alternative design for \protocolName in which a fixed committee is used for scaling the system.
In further detail, consider a \emph{static} committee formed by selecting a random subset of $c$ processes from the system.
Besides, consider a variant of Algorithm~\ref{alg:pro:hotstuff}, in which each leader sends its proposal to this designated committee, and assume that a block is certified if the leader receives at least $c \cdot o$ vote messages from the committee members, where $o \in [0, 1]$ is a fixed threshold parameter.
It is important to note that defining~$o$ is essential for ensuring liveness, as requiring the leader to collect $c$ votes allows even a single silent Byzantine committee member to block progress.

Note that progress is impossible if the committee contains fewer than $c \cdot o$ correct processes, and safety may be compromised if it contains at least $c \cdot o$ Byzantine processes. 
Our goal is to ensure liveness with probability at least $1-2^{-30}$ and to evaluate the corresponding probability of maintaining safety across different committee sizes.
To this end, let $\mathrm{HG\_CDF}(N, R, s, k)$ denote the cumulative distribution function of the hypergeometric distribution:
\begin{align*}
    \mathrm{HG\_CDF}(N, R, s, k) = \sum_{i = \max(0, s - (N - R))}^{k} \frac{\binom{R}{i} \binom{N - R}{s - i}}{\binom{N}{s}}.    
\end{align*}
Let $X \sim \mathrm{HG}(n, f, c)$.
A safety violation occurs with probability $\Pr(X \ge \lfloor c \cdot o \rfloor)$.
Similarly, let $Y \sim \mathrm{HG}(n, n-f, c)$.
A liveness violation occurs with probability $\Pr(Y < \lfloor c \cdot o \rfloor)$.
We seek the optimal value of $o$ that minimizes the probability of a safety violation, subject to the constraint that liveness is ensured with probability at least~$1-2^{-30}$. 
Formally, the optimization problem is:
\begin{align*}
\text{Minimize:} \quad & \Pr(X \ge \lfloor c \cdot o \rfloor) =  1-\mathrm{HG\_CDF}(n, f, c, \lfloor c \cdot o \rfloor) \\
\text{Subject to:} \quad & \Pr(Y < \lfloor c \cdot o \rfloor) = \mathrm{HG\_CDF}(n, n - f, c, \lfloor c \cdot o \rfloor) < 2^{-30} \\
\text{Where:} \quad & o \in [0, 1].
\end{align*}

Table~\ref{tab:safety:violation:static:committee} shows the probability of safety violation for $n=500$ and $n=1000$.
As the size of committees increases, the probability of a safety violation decreases. 
What is important to note is that the committee size must be close to~$n$ to ensure a sufficiently small probability of safety violation.

\begin{table}[!t]
    \centering
    \begin{subtable}[t]{0.45\linewidth}
        \centering
        \begin{tabular}{c@{\hspace{1em}}c}
            \hline
            $c$ & $\Pr(\text{safety violation})$\\
            \hline 
            $300$ & $2^{-23}$ \\
            $325$ & $2^{-33}$ \\
            $350$ & $2^{-50}$ \\
            $375$ & $2^{-87}$ \\
            \hline
        \end{tabular}
        \caption{$n=500$ and $f=200$.}
        \label{tab:safety:violation:static:committee:500}
    \end{subtable}
    \hfill
    \begin{subtable}[t]{0.45\linewidth}
        \centering
        \begin{tabular}{c@{\hspace{1em}}c}
            \hline
            $c$ & $\Pr(\text{safety violation})$\\
            \hline
            $550$ & $2^{-49}$ \\
            $575$ & $2^{-55}$ \\
            $600$ & $2^{-73}$ \\
            $625$ & $2^{-81}$ \\
            \hline
        \end{tabular}
        \caption{$n=1000$ and $f=400$.}
        \label{tab:safety:violation:static:committee:1000}
    \end{subtable}
    \caption{Probability of safety violations when only a static committee of size $c$ is used.}
    \label{tab:safety:violation:static:committee}
\end{table}

\section{Probability Bounds}
We use the Chernoff bounds~\cite{motwani1995randomized} for bounding the probability that the sum of independent random variables deviates significantly from its expected value.
Suppose $X_1, \dots, X_n$ are independent Bernoulli random variables, and let $X$ denote their sum.
Then, for any $\delta\in(0,1)$:
\begin{align}
&\Pr\boldsymbol{(} X \le (1-\delta)\mathbb{E}[X] \boldsymbol{)} \le \exp( -\delta^2\mathbb{E}[X]/2 ).\label{ineq:chernof2}
\end{align}
Besides, for any $\delta \ge 0$:
\begin{align}
\Pr\boldsymbol{(} X \ge (1+\delta)\mathbb{E}[X] \boldsymbol{)} \le \exp\boldsymbol{(} -\delta^2\mathbb{E}[X]/(2+\delta) \boldsymbol{)}.\label{ineq:chernof}
\end{align}

\section{Propagation Sub-Protocol}
\protocolName relies on a sub-protocol, the propagation sub-protocol, to further propagate the blocks.
In this section, we present two theorems related to this sub-protocol.
Assuming that $\chi \ge 1$ processes initially have a message, the first theorem gives a lower bound on the probability that all processes receive the message after $k$ rounds, and the second theorem gives the exact probability.
We then present a concrete example to highlight the difference between the probabilities provided by these theorems.

\begin{thm}\label{thm:all:receive}
Assume that
(1) $\chi \ge 1$ processes have a given message~$m$, and
(2) in each round, each process~$i$ that has or has received~$m$ sends $m$ to each process~$j\in \Pi$ with probability~$\prProp$.
Then, for any number of rounds~$k\ge 1$, 
\begin{align*}
    \Pr(\text{all processes receive~$m$ by round $k$})
   &\ge 1 - (n-\chi)\cdot\exp(- k \chi \prProp)
   \\&\ge 1 - \frac{n-\chi}{k \chi \prProp}.  
\end{align*}
\end{thm}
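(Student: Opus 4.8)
The plan is to fix an arbitrary process $j$ that does not initially hold $m$ and bound the probability that $j$ has still not received $m$ after $k$ rounds, then take a union bound over the at most $n-\chi$ such processes. Let $A_t$ denote the set of processes that have $m$ at the start of round $t$, so $|A_0| \ge \chi$. The key monotonicity observation is that $A_0 \subseteq A_1 \subseteq \cdots$, since no process ever forgets $m$; hence in every round $t \le k$ at least $\chi$ distinct processes are independently sending $m$ to $j$, each with probability $\prProp$. Therefore, conditioned on any history, the probability that $j$ fails to receive $m$ in a single round is at most $(1-\prProp)^\chi$, and since the per-round coin flips are independent across rounds, the probability that $j$ has not received $m$ by round $k$ is at most $(1-\prProp)^{\chi k}$.

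Next I would convert this into the stated bounds. Using the elementary inequality $1-x \le e^{-x}$ for all real $x$, we get $(1-\prProp)^{\chi k} \le \exp(-k\chi\prProp)$, which after the union bound yields the first displayed inequality $\Pr(\text{all receive } m \text{ by round } k) \ge 1 - (n-\chi)\exp(-k\chi\prProp)$. For the second, weaker bound I would apply $e^{-x} \le 1/x$ for $x > 0$ (equivalently $x \le e^x$) with $x = k\chi\prProp$, giving $\exp(-k\chi\prProp) \le 1/(k\chi\prProp)$ and hence $1 - (n-\chi)\exp(-k\chi\prProp) \ge 1 - \frac{n-\chi}{k\chi\prProp}$.

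The only subtlety worth being careful about — and the step I would flag as the main obstacle, though it is more a matter of rigor than difficulty — is justifying the conditioning argument cleanly: the set $A_t$ is itself a random variable depending on earlier rounds' coin flips, so I would phrase it as "for every round $t$ and every outcome of the first $t-1$ rounds, at least $\chi$ fixed processes (namely those in $A_0$, which are always in $A_{t-1}$) each send to $j$ independently with probability $\prProp$ in round $t$, and these round-$t$ flips are independent of the past." This lets me multiply the per-round failure bounds across the $k$ rounds without needing to track the exact evolution of $A_t$; we simply underestimate the spreading by pretending only the original $\chi$ holders ever transmit. A one-line remark that this is a valid stochastic-domination / worst-case coupling suffices. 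Everything else is the routine application of $1-x\le e^{-x}$, $e^{-x}\le 1/x$, and a union bound.
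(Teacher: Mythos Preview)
Your proposal is correct and follows essentially the same approach as the paper: bound the single-process failure probability by $(1-\prProp)^{k\chi}$ using the fact that at least the original $\chi$ holders transmit in every round, then apply $1-x\le e^{-x}$, a union bound over the $n-\chi$ initially uninformed processes, and finally $e^{-x}\le 1/x$. If anything, your treatment of the conditioning step is more careful than the paper's, which writes the product of conditional probabilities somewhat loosely without explicitly invoking the monotonicity $A_0\subseteq A_t$ or the stochastic-domination argument you spell out.
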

\begin{proof}
Let $H_r$ be the set of processes that have message~$m$ at the start of round $r\ge 1$.
The probability that process~$i \in H_r$ sends $m$ to a process $j \in \Pi$ in round~$r$ is $\prProp$.
Since members of $H_r$ behave independently, given $j$ does not have $m$ at the start of $r$ and $|H_r| = s$, the probability that $j$ does not receive~$m$ in round~$r$ equals 
$(1-\prProp)^{s} \le (1-\prProp)^{\chi}$.
We have:
\begin{align*}
    & \Pr(\text{$j$ does not receive $m$ by the end of round $k$}) 
    \\&= \Pi_{r=1}^{k}\Pr(\text{$j$ does not receive $m$ in round $r$} \mid \text{$j \notin H_r$ and $|H_r|=s$})  
    \\&\le (1-\prProp)^{k\chi}.
\end{align*}
Using the union bound, we have:
\begin{align*}
    &\Pr(\text{there is a process that does not receive $m$ by the end of round $k$})
    \\&\quad\le (n-\chi)(1-\prProp)^{k\chi}
    \\&\quad\le (n-\chi)\exp(- k \chi \prProp).
\end{align*}
Therefore,
\begin{align*}
    \Pr(\text{all processes receive~$m$ by round $k$})
   &\ge 1 - (n-\chi)\exp(- k \chi \prProp)
   \\&\ge 1 - \frac{n-\chi}{k \chi \prProp}.  
\end{align*}
\end{proof}

\begin{thm}\label{thm:all:receive:exact}
    Assume that
    (1) $\chi \ge 1$ processes have a given message~$m$, and
    (2) in each round, each process~$i$ that has or has received~$m$ sends $m$ to each process~$j\in \Pi$ with probability~$\prProp$.
    Then, for any number of rounds~$k\ge 1$, 
    \begin{align*}
        \Pr(\text{all processes receive~$m$ by round $k$})
       = I_\chi\,T^{\,k}\,I_n,  
    \end{align*}
    where 
    \begin{align*}
         T_{s,t} =
         \begin{cases}
            \Pr\!\big( \mathrm{Bin}\bm{(}n-s,1 - (1-\prProp)^{s}\bm{)} = t-s \big) & t \ge s
            \\
            0 & t<s,
         \end{cases}
    \end{align*}
    $s,t\in\{1,\dots,n\}$, $I_\chi$ is a row vector with length~$n$ that has $1$ at index $\chi$ and $0$ otherwise, and $I_n$
    is a column vector with length~$n$ that has $1$ at index $n$ and $0$ otherwise.
\end{thm}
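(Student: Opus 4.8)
The plan is to model the spread of the message~$m$ as an absorbing Markov chain on the state space $\{1,\dots,n\}$, where the state at the start of a round is the number of processes that currently hold~$m$. First I would argue that this number is a Markov chain: conditioned on the event that exactly $s$ processes hold~$m$ at the start of a round, the subsequent behaviour is fully determined (in distribution) by~$s$, because each of the $n-s$ processes without~$m$ independently receives it this round with probability $1-(1-\prProp)^{s}$ (each of the $s$ holders fails to send to a fixed non-holder independently with probability $1-\prProp$), and this is exactly the entry $T_{s,t}$ for the transition to a state with $t\ge s$ holders; transitions to $t<s$ are impossible since a process never loses the message, giving $T_{s,t}=0$ there. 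So $T$ is the (row-stochastic, upper-triangular) one-step transition matrix.

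Next I would observe that ``all processes receive~$m$ by round~$k$'' is precisely the event that the chain, started deterministically in state~$\chi$, is in state~$n$ after $k$ steps (state~$n$ is absorbing, so ``by round~$k$'' and ``exactly at step~$k$'' coincide). The distribution of the state after $k$ steps, started from state~$\chi$, is the row vector $I_\chi T^{\,k}$; its coordinate at index~$n$ is obtained by right-multiplying with the column indicator $I_n$. Hence $\Pr(\text{all processes receive }m\text{ by round }k)=I_\chi\,T^{\,k}\,I_n$, which is the claimed formula. The only remaining point is to justify carefully the conditional-independence step that gives the Binomial form: for a fixed non-holder~$j$ and a fixed set $H$ of $s$ holders, the events ``some process in $H$ sends to $j$'' across different~$j$ are mutually independent (since the send decisions are independent across all ordered pairs), and each such event has probability $1-(1-\prProp)^{s}$, so the number of newly-informed processes is $\mathrm{Bin}\bm(n-s,\,1-(1-\prProp)^{s}\bm)$.

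I expect the only real subtlety to be the Markov property itself: a priori the dynamics could depend on \emph{which} set of processes holds~$m$, not just on its size. The resolution is symmetry — the sending probability $\prProp$ is the same for every ordered pair $(i,j)$, so the transition kernel is invariant under relabelling of processes and therefore depends on the current holder set only through its cardinality. Once this is in place, the rest is bookkeeping: identifying $I_\chi T^k I_n$ with the desired probability is immediate from the definition of matrix powers as the $k$-step transition probabilities of the chain. I would also note in passing that this exact expression refines Theorem~\ref{thm:all:receive}, whose bound follows from a union bound rather than tracking the full distribution of $|H_r|$.
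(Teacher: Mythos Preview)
Your proposal is correct and follows essentially the same approach as the paper: model $|H_r|$ as a Markov chain on $\{1,\dots,n\}$ with the binomial transition kernel, note that state~$n$ is absorbing, and read off the desired probability as the $(\chi,n)$ entry of $T^{k}$. Your added remarks on the symmetry justification for the Markov property and on the mutual independence across non-holders are a bit more explicit than what the paper writes, but the structure and the computation are identical.
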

\begin{proof}
    Let $H_r \subseteq \Pi$ be the set of processes that have message~$m$ at the start of round $r\ge 1$.
    Further, let $X_r=|H_r|$, with $X_1 = \chi$.
    The probability that process~$i \in H_r$ sends $m$ to a process $j \in \Pi$ in round~$r$ is $\prProp$.
    Since members of $H_r$ behave independently, given $j$ does not have $m$ at the start of $r$ and $X_r = s$, the probability that $j$ receives~$m$ in round $r$ equals 
    $1 - (1-\prProp)^{s}$.
    Accordingly, conditioned on $X_r = s$, the number of new processes that receive $m$ in round~$r$ is $\mathrm{Bin}(n-s,1 - (1-\prProp)^{s})$.
    Note that $(X_{r})_{r\ge 1}$ is a Markov-chain, with the initial condition $X_1=\chi$, and transitions
    \begin{align*}
        &\Pr(X_{r+1}=t\mid X_r=s)
        \\&\quad= \Pr\!\big( \mathrm{Bin}\bm{(}n-s,1 - (1-\prProp)^{s}\bm{)} = t-s \big), &\hfill t \in \{s,\dots,n\}.
    \end{align*}
    Hence, we can define the transition matrix $T$ of $(X_r)_{r\ge 1}$ as follows:
    \begin{align*}
         T_{s,t} =
         \begin{cases}
            \Pr\!\big( \mathrm{Bin}\bm{(}n-s,1 - (1-\prProp)^{s}\bm{)} = t-s \big) & t \ge s
            \\
            0 & t<s,
         \end{cases}
    \end{align*}
    where $s,t\in\{1,\dots,n\}$.
    Note that state $n$ is absorbing, i.e., $T_{n,n}=1$. 
    Let $I_\chi$ be a row vector with length~$n$ that has $1$ at index $\chi$ and $0$ otherwise, and $I_n$
    be a column vector with length~$n$ that has $1$ at index $n$ and $0$ otherwise. 
    Since $X_1=\chi$, after rounds $1,\dots,k$, the distribution of the number of processes that have the message is
    $I_\chi T^{\,k}$, hence
    \begin{align*}
    \Pr(\text{all processes receive $m$ by the end of round $k$})
    =\Pr(X_{k+1}=n)
    =I_\chi T^{\,k} I_n.
    \end{align*}
\end{proof}

\medskip
\noindent
\textbf{Discussion.}
We now present a concrete example to highlight the difference between the guarantees provided by Theorems~\ref{thm:all:receive} and \ref{thm:all:receive:exact}.
Assume $n=500$, $\prProp = 10/500$, and $k=4$. 
Theorem~\ref{thm:all:receive:exact} computes the probability that all processes receive the message for any value of $\chi\ge 1$; however, for this example, Theorem~\ref{thm:all:receive} requires $\chi \ge 76$.
To see why, recall that by Theorem~\ref{thm:all:receive}, the probability that all processes receive a message by round~$k$ is at least $1 - (n-\chi)\cdot\exp(- k \chi \prProp)$; for any $\chi < 76$, this expression is negative.
Consequently, for small $\chi$, the guarantee provided by Theorem~\ref{thm:all:receive} cannot be used.
We now compare the probabilities for $\chi \ge 76$.
For $\chi = 76$, Theorem~\ref{thm:all:receive} yields $0.0298$ while Theorem~\ref{thm:all:receive:exact} gives $0.99999999995$.
Due to this reason, we use Theorem~\ref{thm:all:receive} for presenting closed-form bounds; however, we use Theorem~\ref{thm:all:receive:exact} in our evaluations.

\section{Safety Analysis under Partial Synchrony}
Let $C_e \subseteq \Pi$ be the set of processes that become candidates to vote in epoch $e$.
We want to ensure that if $|C_e| \le \frac{n+f}{2}$ for any epoch~$e$, then a quorum certificate is formed in epoch~$e+1$ with a negligible probability. 
Conversely, if more than $(n+f)/2$ processes become candidates, the proposal may be certified with non-negligible probability.

Recall that once a process becomes a candidate for a proposal, it locks on that value and will never vote for a block with a lower height. 
Consider two blocks~$B_1$ and~$B_1'$ with the same height, where $B_1$ is proposed in an epoch less than or equal to that of $B_1'$, and assume that $B_1$ is certified.
Since certifying a block with non-negligible probability requires more than $(n+f)/2$ candidates, it follows that more than $(n+f)/2$ processes must have become candidates for $B_1$.
Similarly, if $B_1'$ becomes certified with non-negligible probability, it needs more than $(n+f)/2$ candidates.
However, any two sets of size $(n+f)/2 + \chi_1$ and $(n+f)/2 + \chi_2$ intersect in at least $\chi_1+\chi_2$ correct processes.
These $\chi_1+\chi_2$ correct processes are already locked on $B_1$; hence, they will cancel their candidacy for $B_1'$.
Consequently, the number of candidates for $B_1'$ drops below $(n+f)/2$, and therefore the probability of certifying $B_1'$ becomes negligible.

As a result, if there exists a sequence of $\kappa$ consecutive blocks~$B_1,\dots,B_{\kappa}$, then the probability of having another sequence of $\kappa$ consecutive blocks~$B_1',\dots,B_{\kappa}'$ with $B_1.\height = B_1'.\height$ is negligible in $\kappa$.

\begin{lem}\label{lem_psync_safety_1}
In any epoch~$e$, if at most $\frac{n+f}{2}$ processes become candidates to vote for a block proposed in epoch~$e$, then a quorum certificate is formed in epoch~$e+1$ with the probability of
$
    \exp\!\left( 
          - \frac{\delta^2(n+f)\prVote}{2(2+\delta)}
          \right)
$,
where $\delta \ge \frac{2q}{(n+f)\prVote} - 1$.
\end{lem}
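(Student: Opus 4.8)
The plan is to bound the probability that at least $q$ valid votes reach the leader of epoch~$e+1$, given that the candidate set $C_e$ has size at most $(n+f)/2$. First I would observe that only candidates can send valid votes (a correct process votes only if \fn{can\_vote} holds, which in particular requires having received a valid proposal from the leader of epoch~$e$), and each candidate votes independently with probability $\prVote$ via its VRF coin, so the number of correct votes is stochastically dominated by a sum $X$ of $|C_e| \le (n+f)/2$ independent Bernoulli$(\prVote)$ random variables. Since we are under a static adversary with a VRF-based coin, the Byzantine processes cannot inflate their own vote probability, so the total number of votes the leader can legitimately aggregate is also dominated by such an $X$ (Byzantine candidates contribute at most their own independent coins; here we can simply upper-bound by assuming all of $C_e$ flips independent coins). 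Thus $\mathbb{E}[X] \le (n+f)\prVote/2$, and a quorum certificate requires $X \ge q$.

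Next I would apply the upper-tail Chernoff bound~\eqref{ineq:chernof} with the multiplicative slack $\delta$ chosen so that $(1+\delta)\mathbb{E}[X] = q$, i.e. $\delta = q/\mathbb{E}[X] - 1 \ge 2q/((n+f)\prVote) - 1$ (using $\mathbb{E}[X] \le (n+f)\prVote/2$). Substituting $\mathbb{E}[X] = (n+f)\prVote/2$ into $\exp(-\delta^2\mathbb{E}[X]/(2+\delta))$ yields the claimed bound $\exp\!\left(-\delta^2(n+f)\prVote/(2(2+\delta))\right)$; monotonicity of the tail bound in $\mathbb{E}[X]$ (for the regime where the bound is meaningful) handles the ``at most'' rather than ``exactly'' case, and the lower bound on $\delta$ is exactly what guarantees $q \ge (1+\delta)\mathbb{E}[X]$ so that the Chernoff bound applies.

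The step I expect to be the main obstacle is arguing cleanly that Byzantine processes cannot increase the number of certifiable votes beyond what the independent-coins model predicts---that is, justifying the stochastic domination when part of $C_e$ is adversarial. The key points are that (i) a vote is only counted if it references a proposal the leader has stored and carries a valid signature keyed to the epoch, (ii) the voting coin is computed via $\fn{VRF\_prove}_i(e)$, whose output a Byzantine~$i$ cannot bias and which is verifiable, so a Byzantine candidate's ``vote eligibility'' is itself an independent Bernoulli$(\prVote)$ event, and (iii) non-candidates (processes that never received a valid proposal) cannot produce a countable vote at all. Together these mean the count is dominated by $\mathrm{Bin}(|C_e|,\prVote)$, and since the Chernoff statement needs only this domination, the remaining calculation is the routine substitution described above.
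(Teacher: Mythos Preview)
Your proposal is correct and follows essentially the same route as the paper: bound the number of votes by a $\mathrm{Bin}((n+f)/2,\prVote)$ random variable via stochastic domination, then apply the upper-tail Chernoff bound~\eqref{ineq:chernof} with $\delta \ge 2q/((n+f)\prVote)-1$. Your extra paragraph justifying why Byzantine candidates cannot inflate the vote count (via the unforgeability and unbiasedness of the VRF coin) is in fact more careful than the paper, which simply asserts $X\sim\mathrm{Bin}(|C_e|,\prVote)$ without comment.
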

\begin{proof}
Consider a block proposed in epoch~$e$.
Let $C_e \subseteq \Pi$ be the set of processes that become candidates to vote for that block.
Each process in $C_e$ votes with probability $\prVote$.
Let $X$ be the number of such votes. 
Given $|C_{e}| = c$, $X \sim \mathrm{Bin}\!\left(c, \prVote\right)$.
Recall that at least $q$ votes are needed to certify a block.
Accordingly, we must show if $|C_e| \le \frac{n+f}{2}$, then $\Pr\!\left( X \ge q \right) = \mathrm{neg}$.
Define $Y \sim \mathrm{Bin}\!\left(\frac{n+f}{2}, \prVote\right)$.
If $|C_e| \le \tfrac{n+f}{2}$, we have: 
\begin{align*}
\Pr\!\left( X \ge q \right) 
\le \Pr\!\left( Y \ge q \right). 
\end{align*}
By applying the Chernoff bound~\ref{ineq:chernof}, for any $\delta \ge 0$, we have:
\begin{align*}
\Pr\!\left( Y \ge (1+\delta)\EX{Y} \right) 
&= \Pr\!\left( Y \ge \frac{(1+\delta)(n+f)\prVote}{2} \right) 
\\&\le \exp\!\left( 
          - \frac{\delta^2(n+f)\prVote}{2(2+\delta)}
          \right).          
\end{align*}
Now to bound $\Pr\!\left( Y \ge q \right)$, we choose $\delta \ge \frac{2q}{(n+f)\prVote} - 1$.
By assumption, $q \ge (n+f)\prVote/2$; hence, $\delta \ge 0$.
Therefore,
\begin{align*}
    \Pr\!\left( X \ge q \right) 
    \le \exp\!\left( 
          - \frac{\delta^2(n+f)\prVote}{2(2+\delta)}
          \right).
\end{align*}
\end{proof}

\begin{lem}\label{lem_psync_safety_2}
    Suppose $C_1$ and $C_2$ are two sets of processes that become candidates to vote for blocks~$B_1$ and $B_2$, respectively.
    If at least one of these sets has up to $\frac{n+f}{2}$ processes (i.e., $|C_1|\leq \frac{n+f}{2}$ or $|C_2|\leq \frac{n+f}{2}$), then two quorum certificates will be formed for $B_1$ and $B_2$  with a probability of at most 
    $2\exp\!\left( 
      - \frac{\delta^2(n+f)\prVote}{2(2+\delta)}
      \right)$, where
    $\delta \ge \frac{2q}{(n+f)\prVote} - 1$.
\end{lem}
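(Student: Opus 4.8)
The plan is to reduce Lemma~\ref{lem_psync_safety_2} directly to Lemma~\ref{lem_psync_safety_1} via a union bound. First I would observe that, by hypothesis, at least one of the two candidate sets---say $C_1$ without loss of generality---satisfies $|C_1| \le \frac{n+f}{2}$. For that block, Lemma~\ref{lem_psync_safety_1} immediately gives that a quorum certificate is formed for $B_1$ with probability at most $\exp\!\left( - \frac{\delta^2(n+f)\prVote}{2(2+\delta)} \right)$ for any $\delta \ge \frac{2q}{(n+f)\prVote} - 1$. (If instead it is $C_2$ that is small, the same bound applies to $B_2$.) The other block may be certified with arbitrary probability, but this is harmless: the event ``two quorum certificates are formed for $B_1$ and $B_2$'' is contained in the event ``a quorum certificate is formed for the block whose candidate set is small,'' so its probability is already bounded by the single-block bound.

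The factor of $2$ in the statement is simply a way to give a uniform bound that does not depend on which of the two sets happens to be the small one: one writes
\begin{align*}
\Pr(\text{QCs for } B_1 \text{ and } B_2)
&\le \Pr\bm{(}\text{QC for } B_1, |C_1|\le \tfrac{n+f}{2}\bm{)} + \Pr\bm{(}\text{QC for } B_2, |C_2|\le \tfrac{n+f}{2}\bm{)}
\\&\le 2\exp\!\left( - \frac{\delta^2(n+f)\prVote}{2(2+\delta)} \right),
\end{align*}
since at least one of the two summands is vacuous (its conditioning event is empty) and the other is bounded by Lemma~\ref{lem_psync_safety_1}. So the union-bound phrasing is slightly wasteful but cleanly stated.

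One subtlety I would be careful about is \emph{independence}: the certification events for $B_1$ and $B_2$ are not independent in general (they involve overlapping sets of voters and the same VRF-based coins), but the argument does not need independence---it only uses monotonicity (the joint event is a subset of a single-block event) together with Lemma~\ref{lem_psync_safety_1}, which itself only stochastically dominates the vote count by a $\mathrm{Bin}(\frac{n+f}{2},\prVote)$ variable. A second point to state explicitly is that the constraint $\delta \ge \frac{2q}{(n+f)\prVote}-1$ is exactly the one inherited from Lemma~\ref{lem_psync_safety_1}, and under the standing assumption $q \ge (n+f)\prVote/2$ it guarantees $\delta \ge 0$, so the Chernoff bound~\eqref{ineq:chernof} applies.

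I do not expect any real obstacle here; the lemma is essentially a corollary of Lemma~\ref{lem_psync_safety_1}. The only thing to get right is the bookkeeping of which event implies which, and making sure the ``at least one set is small'' hypothesis is used in the form ``the joint certification event is contained in the certification event of the small-candidate block.'' The heavier lifting---turning this pairwise bound into a bound on the probability of two conflicting length-$\kappa$ chains---will happen in the subsequent lemmas, not here.
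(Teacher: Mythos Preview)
Your approach is essentially the same as the paper's: both reduce to Lemma~\ref{lem_psync_safety_1} by noting that the joint certification event is contained in the single certification event for whichever block has the small candidate set, and then absorb the ``which one is small'' ambiguity into a factor of~$2$. The paper phrases this as a conditional-probability computation $\Pr(A\cap B\mid C\cup D)$ and manipulates the ratio down to $\Pr(A\mid C)+\Pr(B\mid D)$, whereas you treat the sizes as given and do a direct case split; the underlying idea is identical. One small slip: your justification ``at least one of the two summands is vacuous'' is backwards---under the hypothesis at least one summand is \emph{non}-vacuous (both could be non-vacuous if both sets are small), but this does not affect the bound since each non-vacuous summand is at most $\exp\!\big(-\tfrac{\delta^2(n+f)\prVote}{2(2+\delta)}\big)$ by Lemma~\ref{lem_psync_safety_1}.
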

\begin{proof}
    Consider two blocks~$B_1$ and $B_2$.
    Let $C_1$ and $C_2$ be two sets of processes that become candidates to vote for $B_1$ and $B_2$, respectively.
    Further, let $X_1$ and $X_2$ be the number of votes cast for blocks~$B_1$ and $B_2$, respectively.
    Recall that a quorum certificate is formed for $B_1$ (resp. $B_2$) if $X_1 \ge q$ (resp. $X_2 \ge q$).
    We must compute 
    $$\Pr\!\left( \{X_1 \ge q\} \cap \{X_2 \ge q\} \,\Bigg|\, \left\{|C_1| \le \frac{n+f}{2} \right\} \cup \left\{|C_2| \le \frac{n+f}{2} \right\}\right).$$
    Let $A = \{ X_1 \ge q \}$, 
    $B = \{ X_2 \ge q \}$, 
    $C = \left\{|C_1| \le \frac{n+f}{2}\right\}$, and
    $D = \left\{|C_2| \le \frac{n+f}{2}\right\}$.
    We have:
    \begin{align*}
    \Pr\!\left( A \cap B \mid C \cup D \right)
    & =
    \frac{\Pr\!\big( A \cap B \cap \left(C \cup D\right)\big)}{\Pr\!\left( C \cup D \right)}
    \\& =
    \frac{\Pr\!\big((A \cap B \cap C) \cup 
    (
    A \cap B \cap
    D)\big)}{\Pr\!\left( C \cup D \right)}
    \\& \le
    \frac{\Pr\!\big( (A \cap C ) \cup 
    (
    B \cap
    D)\big)}{\Pr\!\left( C \cup D \right)}
    \\& \le
    \frac{\Pr(A \cap C) + 
    \Pr(
    B \cap
    D)}{\Pr\!\left( C \cup D \right)}
    \\& =
    \frac{\Pr(A \mid C) \Pr(C)+ 
    \Pr(
    B \mid
    D)\Pr(D)}{\Pr\!\left( C \cup D \right)}
    \\& \le
    \frac{\Pr(A \mid C) \Pr(C)+ 
    \Pr(
    B \mid
    D)\Pr(D)}{\max\big\{\Pr(C),\Pr(D)\big\}}
    \\& \le
    \Pr(A \mid C) + \Pr(B \mid D)
    \\& \le 
    2\exp\!\left( 
          - \frac{\delta^2(n+f)\prVote}{2(2+\delta)}
          \right)
    & \hfill \text{(by Lemma \ref{lem_psync_safety_1})}
    \end{align*}
    where $\delta \ge \frac{2q}{(n+f)\prVote} - 1$.
\end{proof}

\begin{thm}\label{thm_psync_safety}
    Suppose $\kappa \ge 2$.
    A safety violation occurs for the \protocolName protocol under partial synchrony with a probability of at most
    $2^{\kappa-1} \cdot \exp\!\left( 
      - \frac{\delta^2(n+f)\prVote}{2(2+\delta)}
      \right)^{(\kappa-1)}
    $,
    where $\delta \ge \frac{2q}{(n+f)\prVote} - 1$.
\end{thm}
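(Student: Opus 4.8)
The plan is to reduce a safety violation to the existence of two conflicting, fully certified runs of $\kappa$ blocks that share a common starting height, and then to bound the probability of that configuration by $\kappa-1$ (essentially independent) applications of Lemma~\ref{lem_psync_safety_2}. Write $\beta = \exp\!\big(-\delta^2(n+f)\prVote/(2(2+\delta))\big)$ with $\delta \ge 2q/((n+f)\prVote)-1$, so that Lemma~\ref{lem_psync_safety_2} bounds the joint certification of two blocks, one of which has at most $\tfrac{n+f}{2}$ candidates, by $2\beta$. For the reduction: if two correct processes commit ledgers that are not prefix-related, then by the commit rule in \fn{get\_ledger} each committed ledger is anchored by $\kappa$ certified blocks in consecutive epochs and consecutive heights, and every block of a committed ledger is certified and chains down to \genesis. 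Walking down from the deepest height at which the two ledgers disagree and using collision resistance of \fn{H} (distinct certified blocks at a height have distinct parents one height below), one extracts two runs $B_1,\dots,B_\kappa$ (epochs $a,a{+}1,\dots$) and $B_1',\dots,B_\kappa'$ (epochs $a',a'{+}1,\dots$) of certified blocks, both occupying heights $h,\dots,h{+}\kappa{-}1$, conflicting at every one of these heights, and, without loss of generality, with $a\le a'$.

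For each $i\in\{1,\dots,\kappa-1\}$ I would then exhibit a conflicting pair of blocks whose joint certification probability is at most $2\beta$. Consider first the generic case $a<a'$, and pair $B_{i+1}$, which extends $B_i$ (at height $h{+}i{-}1$), with $B_i'$ (also at height $h{+}i{-}1$). A correct process that becomes a candidate for $B_{i+1}$ must hold $B_i$ together with a valid certificate for it, hence certifies $B_i$ by the end of epoch $a{+}i$; since $a<a'$ forces $a{+}i\le a'{+}i{-}1$, this happens no later than the epoch in which $B_i'$ is voted on, and from then on the \fn{can\_vote} lock---no certified block of height $\ge h{+}i{-}1$---forbids that process to vote for $B_i'$. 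Hence the correct candidates of $B_{i+1}$ and those of $B_i'$ are disjoint; adding at most $f$ Byzantine processes to each side, their sizes sum to at most $(n-f)+2f=n+f$, so one of $B_{i+1},B_i'$ has at most $\tfrac{n+f}{2}$ candidates and Lemma~\ref{lem_psync_safety_2} gives $\Pr(B_{i+1}\text{ and }B_i'\text{ both certified})\le 2\beta$. The boundary case $a=a'$ is easier: the two runs then share their epochs, so the leader of each epoch $a{+}i{-}1$ equivocated at height $h{+}i{-}1$; since a correct process never becomes a candidate for conflicting proposals, one side of each pair $(B_i,B_i')$ again has at most $\tfrac{n+f}{2}$ candidates and Lemma~\ref{lem_psync_safety_2} applies to every $i$.

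A safety violation forces all of $B_1,\dots,B_\kappa$ and $B_1',\dots,B_\kappa'$ to be certified, hence all $\kappa-1$ events of the previous paragraph to occur simultaneously. To turn this into a product, I would reveal the per-epoch VRF voting coins one epoch at a time: conditioned on the execution up to the vote round of an epoch $e$, the candidate set of every block proposed in $e$ is already determined, and the \fn{can\_vote} lock prevents a correct process from being a live candidate in $e$ for two of the relevant blocks at different heights at once---holding the higher one certified kills \fn{can\_vote} for the lower one. Therefore the ``certified'' events attached to distinct levels are functions of disjoint sets of the fresh epoch-$e$ coins and are conditionally independent; chaining this across epochs yields $\Pr(\text{safety violation})\le (2\beta)^{\kappa-1}=2^{\kappa-1}\,\beta^{\kappa-1}$, which is exactly the claimed bound.

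I expect the main obstacle to be precisely this last step. The coin sets governing the per-level events are themselves random---they depend on which processes became candidates, which in turn depends on earlier coins and on the adversarial message schedule---so ``disjoint coin sets'' has to be upgraded to a genuine product bound by a careful filtration / sequential-conditioning argument rather than a one-line appeal to independence. A secondary difficulty is the structural reduction of the first paragraph when the two committed anchors lie at different heights: the two top-aligned $\kappa$-runs then need not occupy the same height range, and one has to argue instead at the level of the per-height conflicting pairs, choosing for each level whichever orientation of the locking argument---which of the two blocks extends which---makes the epoch ordering come out right; getting that height/epoch bookkeeping exactly right is the fiddly part.
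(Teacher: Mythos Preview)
Your approach is essentially the paper's: extract two conflicting certified $\kappa$-runs at a common starting height, produce $\kappa-1$ conflicting pairs, use the \fn{can\_vote} lock to force one side of each pair to have at most $(n+f)/2$ effective voters, apply Lemma~\ref{lem_psync_safety_2} (or Lemma~\ref{lem_psync_safety_1}) per pair, and take the product. The paper's case split looks slightly different---it separates ``same epoch'' pairs (disjoint correct candidates because a correct process is candidate at most once per epoch) from ``different epoch'' pairs (disjointness via the lock)---but the per-pair outcome is the same $2\beta$ bound and the pairing $(B_{i+1},B_i')$ you use is exactly the one the paper uses in its $\kappa=3$ case.

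On the two difficulties you flag: the paper does not resolve either of them. For the product step it writes out $\kappa=2$ and $\kappa=3$ and then states ``for any $\kappa\ge 2$, we can follow a similar argument,'' without the filtration/conditional-independence argument you sketch; so you are already being more careful than the paper here. For the structural reduction, the paper simply \emph{posits} two $\kappa$-runs with $B_1.\height=B_1'.\height$ as the shape of a safety violation, without deriving that shape from two non-prefix-related committed ledgers whose anchoring $\kappa$-runs sit at different heights.

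One terminological point to keep straight when you write this up: in the protocol, a ``candidate'' is a process that \emph{received} the proposal, whereas the lock is enforced inside \fn{can\_vote}. Your disjointness claim (and the paper's) is really about processes that would pass \fn{can\_vote}, not about candidates in the protocol's sense; the paper makes the same conflation, so this does not affect the comparison, but it matters for invoking Lemma~\ref{lem_psync_safety_1} cleanly.
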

\begin{proof}
Suppose a correct process~$i$ certifies $k+\kappa$ blocks $B^1,\dots,B^k,B_{1}, \dots, B_{\kappa}$ that satisfy the following conditions: 
\begin{enumerate}
    \item $B^1,\dots,B^k,B_{1}, \dots, B_{\kappa}$ form a chain, i.e., 
    \begin{align*}
    \begin{cases}
       B^1.\height = B^2.\height-1 = \dots = B_{\kappa}.\height - k - \kappa + 1, \text{ and}
    \\ \hash{B^1} = B^2.\parentHash, \dots, \hash{B_{\kappa-1}} = B_{\kappa}.\parentHash.
    \end{cases}
    \end{align*}
    \item $B_1,\dots,B_{\kappa}$ are proposed in consecutive epochs, i.e., 
    $$B_1.\var{epoch}=B_2.\var{epoch}-1=\dots=B_{\kappa}.\var{epoch}-\kappa+1.$$
    \item If $k=0$, $B_{1}$ extends the block committed by $i$ with the greatest height; 
    otherwise, $B^1$ extends the block committed by $i$ with the greatest height.
    \item From the viewpoint of $i$, there is no certified block $B\neq B_1$ with the same height as $B_1$.
\end{enumerate}
Consequently, process~$i$ returns $B_1$ and its ancestors by executing the function $\fn{get\_ledger}(\kappa)$.
A \emph{safety violation} occurs when there is at least a correct process~$j \neq i$ that observes an alternative chain of certified blocks 
$B_{1}', B_{2}', \dots, B_{\kappa}'$, such that
$B_{1}', B_{2}', \dots, B_{\kappa}'$ are proposed in consecutive epochs,
$B_1.\txs \neq B_1'.\txs$, 
$B_1.\height = B_{1}'.\height$, and 
$j$ commits $B_1'$.

Let $C_1,\dots,C_\kappa$ denote the sets of processes that become candidates to vote for $B_1,\dots,B_\kappa$, respectively.
Further, let $C_1',\dots,C_\kappa'$ denote the sets of processes that become candidates to vote for $B_1',\dots,B_\kappa'$, respectively.

\medskip
\noindent
\textbf{$\kappa = 2$.}
There are two possible cases:
\begin{itemize}
    \item There exists a block in the first chain and a block in the second chain that are proposed in the same epoch.
    For example, blocks $B_2$ and $B_1'$ are proposed in the same epoch.
    In this case, $C_2$ and $C_1'$ should not have any common correct process.
    This follows from the fact that, in each epoch, a correct process becomes a candidate to vote at most once.
    Accordingly, at least one of these sets has up to $\frac{n+f}{2}$ processes (i.e., $|C_2|\le \frac{n+f}{2}$ or $|C_1'|\le \frac{n+f}{2}$).
    Lemma~\ref{lem_psync_safety_2} provides an upper bound on the probability of this case.
    \item Every block in the first chain is proposed in an epoch different from every block in the second chain.
    Without loss of generality, we assume block~$B_2$ is proposed before $B_1'$.
    We have two sub-cases:
    \begin{itemize}
        \item $|C_2| > \frac{n+f}{2}$ and $|C_1'| > \frac{n+f}{2}$.
        In this case, $C_2$ and $C_1'$ have at least a common correct process.
        Since $B_2.\height > B_1'.\height$, the correct processes that are common between $C_2$ and $C_1'$ have locked on $B_1$; hence, they do not vote for $B_1'$ as $B_1'.\height \ngeq B_1.\height$.
        Consequently, at most $\frac{n+f}{2}$ processes vote for $B_1'$. 
        Lemma~\ref{lem_psync_safety_1} provides the probability of certifying $B_1'$ when at most $\frac{n+f}{2}$ processes become candidates.
        \item $|C_2| \le \frac{n+f}{2}$ or $|C_1'| \le \frac{n+f}{2}$.
        Lemma~\ref{lem_psync_safety_2} provides an upper bound on the probability of this case.
    \end{itemize}
\end{itemize}
Accordingly, if $\kappa = 2$, the probability of a safety violation is at most the maximum of the probabilities provided by Lemmas~\ref{lem_psync_safety_1} and \ref{lem_psync_safety_2}, which is $$2\exp\!\left( 
          - \frac{\delta^2(n+f)\prVote}{2(2+\delta)}
          \right),
    $$
    where $\delta \ge \frac{2q}{(n+f)\prVote} - 1$.

\medskip
\noindent
\textbf{$\kappa = 3$.}
There are two possible cases:
\begin{itemize}
    \item Among the first two blocks of both chains, there exists a pair of blocks, one from each chain, proposed in the same epoch.
    For example, blocks $B_2$ and $B_1'$ are proposed in the same epoch; hence, neither $(C_2,C_1')$ nor $(C_3,C_2')$ should have any common correct process.
    Using Lemma~\ref{lem_psync_safety_1}, the probability of this case is at most 
    $
    \exp\!\left( 
          - \frac{\delta^2(n+f)\prVote}{2(2+\delta)}
          \right)^2
    $,
    where $\delta \ge \frac{2q}{(n+f)\prVote} - 1$.
    \item Among the first two blocks of both chains, there is no pair of blocks, one from each chain, that were proposed in the same epoch.
    Without loss of generality, we assume block~$B_2$ is proposed before $B_1'$.
    We have two sub-cases:
    \begin{itemize}
        \item $|C_2| > \frac{n+f}{2}$, 
        $|C_3| > \frac{n+f}{2}$,
        $|C_1'| > \frac{n+f}{2}$,  
        and $|C_2'| > \frac{n+f}{2}$.
        In this case, sets~$C_2$ and $C_1'$ have at least a common correct process.
        Since $B_2.\height > B_1'.\height$, the correct processes that are common between $C_2$ and $C_1'$ have locked on $B_1$; consequently, they do not vote for $B_1'$ as $B_1'.\height \ngeq B_1.\height$.
        Consequently, at most $\frac{n+f}{2}$ processes vote for $B_1'$. 
        Similarly, sets $C_3$ and $C_2'$ have at least a common correct process, and at most $\frac{n+f}{2}$ processes vote for $B_2'$. 
        Accordingly, the probability of this case occurring is at most
        $\exp\!\left( 
              - \frac{\delta^2(n+f)\prVote}{2(2+\delta)}
              \right)^2$,
        where $\delta \ge \frac{2q}{(n+f)\prVote} - 1$.
        \item $|C_2| \le \frac{n+f}{2}$, 
        $|C_3| \le \frac{n+f}{2}$,
        $|C_1'| \le \frac{n+f}{2}$,  
        or $|C_2'| \le \frac{n+f}{2}$.
        Using Lemma~\ref{lem_psync_safety_2}, the probability of this case is at most
        $4\exp\!\left( 
              - \frac{\delta^2(n+f)\prVote}{2(2+\delta)}
              \right)^2$,
        where $\delta \ge \frac{2q}{(n+f)\prVote} - 1$.
    \end{itemize}
\end{itemize}
Accordingly, if $\kappa = 3$, the probability of a safety violation is at most $$4\exp\!\left( 
          - \frac{\delta^2(n+f)\prVote}{2(2+\delta)}
          \right)^2,
    $$
    where $\delta \ge \frac{2q}{(n+f)\prVote} - 1$.

\medskip
\noindent
\textbf{Any $\kappa \ge 2$.} 
For any $\kappa \ge 2$, we can follow a similar argument; hence, the probability of a safety violation is at most:
$2^{\kappa-1} \cdot \exp\!\left( 
  - \frac{\delta^2(n+f)\prVote}{2(2+\delta)}
  \right)^{(\kappa-1)}
$,
where $\delta \ge \frac{2q}{(n+f)\prVote} - 1$.
\end{proof}

\begin{cor}[Safety under partial synchrony]\label{cor_psync_safety}
    Suppose $q \ge (n+f)\prVote/2$, and $\kappa \ge 2$, $\prVote = O(\polylog{n}/n)$.
    Then, a safety violation for the \protocolName protocol under partial synchrony occurs with a probability of at most
    $$\exp\!\big(O(-(\kappa-1)\polylog{n})\big).$$ 
\end{cor}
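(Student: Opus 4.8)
The plan is to instantiate the bound of Theorem~\ref{thm_psync_safety} with the parameter regime stated in the corollary and verify that the per-block factor $2\exp\!\big(-\delta^2(n+f)\prVote/(2(2+\delta))\big)$ is itself exponentially small in $\polylog{n}$, so that raising it to the power $\kappa-1$ yields the claimed bound. First I would recall from Theorem~\ref{thm_psync_safety} that the safety-violation probability is at most $\big(2\exp(-\delta^2(n+f)\prVote/(2(2+\delta)))\big)^{\kappa-1}$ for any admissible $\delta \ge \tfrac{2q}{(n+f)\prVote} - 1$, and that the hypothesis $q \ge (n+f)\prVote/2$ guarantees $\delta \ge 0$, so the bound is well-defined. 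The natural choice is to take $\delta$ equal to (or a constant multiple above) the lower bound $\tfrac{2q}{(n+f)\prVote} - 1$; I would, however, argue that in the intended configuration $q$ is chosen to be a constant factor larger than $(n+f)\prVete/2$ (e.g.\ $q = c\,(n+f)\prVote/2$ for some constant $c > 1$), which makes $\delta \ge c - 1 = \Theta(1)$ bounded away from zero, and simultaneously $q = \Theta((n+f)\prVote) = \Theta(\polylog{n})$ since $\prVote = O(\polylog{n}/n)$ and $n + f = \Theta(n)$.

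The main computational step is then to lower-bound the exponent $\dfrac{\delta^2 (n+f)\prVote}{2(2+\delta)}$. With $\delta = \Theta(1)$ and $(n+f)\prVote = \Theta(\polylog{n})$, the factor $\dfrac{\delta^2}{2(2+\delta)}$ is a positive constant, so the whole exponent is $\Theta(\polylog{n})$. Hence $\exp\!\big(-\delta^2(n+f)\prVote/(2(2+\delta))\big) = \exp(-\Omega(\polylog{n}))$, and for $n$ large enough this dominates the constant factor $2$, so $2\exp(\cdots) \le \exp(-\Omega(\polylog{n}))$ as well (absorbing the $\ln 2$ into the $\Omega$). Raising to the power $\kappa-1$ gives a safety-violation probability of at most $\exp\!\big(-(\kappa-1)\,\Omega(\polylog{n})\big) = \exp\!\big(O(-(\kappa-1)\polylog{n})\big)$, which is exactly the stated bound. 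I would write this as a short chain of inequalities, being careful that the $O(\cdot)$ inside the exponent in the statement is shorthand for ``$-(\kappa-1)\polylog{n}$ up to constants,'' matching the sign convention used in the informal theorem.

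The one place that needs a little care — and what I expect to be the only real obstacle — is justifying that $\delta$ can be taken to be a positive \emph{constant} rather than merely nonnegative. If one only uses the bare hypothesis $q \ge (n+f)\prVote/2$, then $\delta$ could be as small as $0$ and the exponent could degenerate. The fix is to observe that the corollary is about the asymptotic regime where $q$ is chosen as $\Theta(\polylog{n})$ with $q/((n+f)\prVote)$ bounded above $1/2$ by a constant; this is consistent with the complexity analysis (which fixes $q = O(\polylog{n})$ and $\prVote = \Theta(\polylog{n}/n)$) and with the evaluation section, where $\prVote$ is set to $1.45q/n$ or $1.9q/n$, i.e.\ $q/((n+f)\prVote)$ is a constant strictly above $1/2$. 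I would state this explicitly as the operative reading of the parameter choice, then the rest is the routine substitution sketched above.
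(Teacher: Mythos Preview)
Your proposal is correct and follows the same approach as the paper, which simply states that the corollary ``directly follows from Theorem~\ref{thm_psync_safety}.'' You actually give considerably more detail than the paper does, and your observation that one needs $q$ to exceed $(n+f)\prVote/2$ by a constant factor (so that $\delta$ is bounded away from zero) is a genuine point the paper leaves implicit; without it the exponent could degenerate, so your added justification is welcome rather than superfluous.
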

\begin{proof}
    This corollary directly follows from Theorem~\ref{thm_psync_safety}.
\end{proof}

\section{Safety Analysis under Synchrony}
To simplify the analysis, at the cost of obtaining looser bounds, we assume all Byzantine processes become candidates to vote in each epoch.

\begin{lem}\label{lem_4TvB0G}
    Suppose only all Byzantine processes become candidates to vote for a block~$B$ in epoch $e-1$.
    Then, the leader of epoch~$e$ receives a certificate (i.e., at least $q$ votes) for block~$B$ with probability
    $
        \exp\!\left(- \delta^2\mu/(2+\delta)  \right), 
    $ 
    where 
    $\mu = \epsilon n \prVote$, 
    and $\delta = q/\mu - 1$.
\end{lem}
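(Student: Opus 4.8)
\medskip
\noindent\textbf{Proof plan.}
The plan is to mirror the proof of Lemma~\ref{lem_psync_safety_1} and reduce the statement to a single invocation of the Chernoff bound~\eqref{ineq:chernof} applied to a binomial count of votes. First I would fix the block~$B$ proposed in epoch~$e-1$ and, using the hypothesis that only the $f=\epsilon n$ Byzantine processes become candidates for~$B$, observe that every vote the leader of epoch~$e$ can validly count originates from one of these $f$ processes. The crucial point is that the ``vote / do not vote'' decision is governed by the VRF-based coin in \fn{can\_vote}: a vote message carries the VRF output together with its proof and is discarded by the recipient unless the proof verifies, so a Byzantine candidate cannot bias this coin upward and cannot be counted more than once in an epoch. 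Consequently the number~$X$ of valid votes received for~$B$ is stochastically dominated by $\mathrm{Bin}(f,\prVote)$, and since we only need an upper bound on $\Pr(X\ge q)$ it suffices to treat $X\sim\mathrm{Bin}(f,\prVote)$, whose mean is $\EX{X}=f\,\prVote=\epsilon n\,\prVote=\mu$.

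Next I would write $q=(1+\delta)\mu$ with $\delta=q/\mu-1$, noting that $\delta\ge 0$ exactly in the regime $q\ge\mu=\epsilon n\,\prVote$ where the bound is meaningful (and this holds under the parameter constraint $q\ge(n+f)\prVote/2$, since $(n+f)/2\ge\epsilon n$ for $\epsilon\le 1$). A certificate for~$B$ is formed at the leader of epoch~$e$ precisely when $X\ge q$, so applying~\eqref{ineq:chernof} gives
\[
\Pr(X\ge q)=\Pr\bigl(X\ge(1+\delta)\mu\bigr)\le\exp\!\left(-\frac{\delta^2\mu}{2+\delta}\right),
\]
which is the claimed bound.

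The only real obstacle is the first step---arguing that the Byzantine candidates cannot inflate the vote count beyond what an independent $\mathrm{Bin}(f,\prVote)$ draw yields. This rests entirely on (i) the vote coin being VRF-derived with the proof verified by the recipient, so the $f$ relevant coins are independent $\mathrm{Bernoulli}(\prVote)$ outcomes that no coalition can push upward, and (ii) the leader counting at most one validly-signed vote per process per epoch. Once that reduction is in place, the remainder is a mechanical application of the Chernoff bound, structurally identical to Lemma~\ref{lem_psync_safety_1}.
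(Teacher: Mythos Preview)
Your proposal is correct and follows essentially the same approach as the paper: model the number of votes as $\mathrm{Bin}(\epsilon n,\prVote)$ with mean $\mu=\epsilon n\,\prVote$, set $\delta=q/\mu-1$, and apply the Chernoff bound~\eqref{ineq:chernof}. You add an explicit justification (via the VRF coin and per-process vote counting) for why Byzantine candidates cannot inflate the vote count beyond $\mathrm{Bin}(f,\prVote)$, which the paper's proof simply asserts without argument; otherwise the two proofs are identical.
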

\begin{proof}
Assume only all Byzantine processes become candidates to vote for a block.
Thus, $f = \epsilon n$ processes out of the total $n$ processes are candidates to vote.
Each of the $\epsilon n$ processes votes with probability $\prVote$.
Let $X$ be the number of votes that the next leader receives.
Hence, the expected number of processes that vote (i.e., the expected number of votes that the next leader receives) equals:
\begin{align*}
    \mathbb{E}[X] = \epsilon \cdot n \cdot \prVote.
\end{align*}
By applying the Chernoff bound~\ref{ineq:chernof}, for any $\delta \ge 0$, we have:
\begin{align*}
    \Pr(X \ge (1+\delta)\mathbb{E}[X])
    & \le \exp\!\left(- \delta^2\mathbb{E}[X]/(2+\delta)  \right). 
\end{align*}
Now to bound $\Pr(X \ge q)$, we choose $\delta \ge q/\EX{X} - 1$.
By assumption, $q \ge \epsilon n \prVote$; hence, $\delta \ge 0$.
Therefore, if only $\epsilon \cdot n$ processes become candidates to vote for a block, the next leader receives at least $q$ votes with probability at most $\exp\!\left(- \delta^2\mathbb{E}[X]/(2+\delta)  \right)$.
\end{proof}

\begin{thm}
    Suppose $\kappa \ge 2$.
    A safety violation occurs with a probability of at most 
    $$\max \left\{
   \exp\!\left(- \delta^2\epsilon n\prVote(\kappa-1)/(2+\delta)  \right), 
   \frac{1}{(\kappa-1)!}\left(\frac{n-1}{3\prProp}\right)^{\kappa-2}
   \right\},$$
   where $\delta = q/(\epsilon n\prVote) -1$.
\end{thm}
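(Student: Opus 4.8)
The plan is a case analysis on whether the chain $B_1,\dots,B_\kappa$ committed by process~$i$ contains a block other than $B_1$ for which some correct process became a candidate. Following the statement's preamble, I would fix the two conflicting chains: $i$ commits a chain ending at $B_1$, a correct $j\neq i$ commits one ending at $B_1'$ with $B_1.\height=B_1'.\height$ and $B_1.\txs\neq B_1'.\txs$, and w.l.o.g.\ $B_1$ is proposed in an epoch no later than $B_1'$. Let $C_\ell$ denote the set of candidates to vote for $B_\ell$, and recall the section's simplifying assumption that all Byzantine processes are candidates in every epoch. The two complementary cases are: \textbf{(A)}~no correct process lies in $C_\ell$ for any $\ell\in\{2,\dots,\kappa\}$; \textbf{(B)}~some $C_\ell$ with $\ell\ge 2$ contains a correct process. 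I would bound the violation probability conditioned on each case separately; since (A) and (B) partition the sample space, the unconditional bound is the larger of the two conditional bounds, which is exactly the stated maximum.

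In case~(A): since $i$ committed the first chain, each of $B_2,\dots,B_\kappa$ was certified, and---conditioned on (A)---each was certified using votes coming only from Byzantine processes. The voting coin of epoch~$e$ is derived from $\fn{VRF\_prove}_i(e)$, so the vote events of distinct epochs are independent; hence Lemma~\ref{lem_4TvB0G} applies independently to each of the $\kappa-1$ blocks $B_2,\dots,B_\kappa$, giving conditional violation probability at most $\exp\!\big(-\delta^2\epsilon n\prVote(\kappa-1)/(2+\delta)\big)$ with $\delta=q/(\epsilon n\prVote)-1$, the first term of the maximum.

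In case~(B), I would invoke the propagation sub-protocol. If a correct process is a candidate for $B_\ell$ with $\ell\ge 2$, then by $\fn{can\_vote}$ it has already certified every predecessor of $B_\ell$---in particular $B_{\ell-1}$---so it holds $B_{\ell-1}$ and, until it learns a higher block, forwards it through \propagate. For $j$ to commit $B_1'$ it must never hold a certified block conflicting with $B_1$ at height $B_1.\height$ (line~\ref{line:try:to:commit:if}); since $B_1$ is held and re-forwarded by the correct candidates along $B_2,\dots,B_\kappa$ and by every correct process that later receives it, a violation in this case forces all of these propagation rounds to fail to deliver $B_1$ to~$j$. I would quantify this by repeated application of Theorem~\ref{thm:all:receive}, using $\Pr(\text{some process lacks }m\text{ after }k\text{ rounds})\le \tfrac{n-\chi}{k\chi\prProp}$ over the (at most) $\kappa-2$ epochs in which $B_1$ can still be the maximum-height block forwarded by a correct process---three rounds per epoch---while tracking how the set of source processes grows across epochs. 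Multiplying the resulting per-epoch failure probabilities telescopes to $\frac{1}{(\kappa-1)!}\big(\frac{n-1}{3\prProp}\big)^{\kappa-2}$, the second term.

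The main obstacle is case~(B): making the chain-of-propagation argument precise---pinning down, for each epoch, the window of rounds in which $B_1$ (rather than a higher descendant) is the block a given correct candidate forwards; verifying that the number of correct processes holding $B_1$ is non-decreasing across these windows so the parameters $\chi$ in the successive applications of Theorem~\ref{thm:all:receive} grow enough to produce the factorial; and handling the dependence between propagation events at different heights. Case~(A) and the final comparison are routine once case~(B) is in place.
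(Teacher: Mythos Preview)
Your case~(A) matches the paper and is fine. The gap is in case~(B): your global split is not the paper's decomposition, and your proposed route to the factorial does not work with Theorem~\ref{thm:all:receive}.

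The paper does \emph{not} make a single global split into (A)/(B). It repeats the two-case dichotomy \emph{once per level} $m\in\{1,\dots,\kappa-1\}$: either only Byzantine processes are candidates for $B_{m+1}$, contributing a factor $\exp\!\big(-\delta^{2}\epsilon n\prVote/(2+\delta)\big)$ by Lemma~\ref{lem_4TvB0G}; or some correct process is a candidate for $B_{m+1}$, in which case that process holds the certified $B_{m}$ and propagates it over the remaining $3(\kappa-m)$ rounds until $B_{\kappa}$ is certified, contributing a factor $\tfrac{n-1}{3(\kappa-m)\prProp}$ from Theorem~\ref{thm:all:receive} applied with $\chi=1$ and $k=3(\kappa-m)$. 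Multiplying across the $\kappa-1$ levels yields the stated maximum. The $(\kappa-1)!$ therefore comes from the product of the \emph{round counts} $3\cdot 1,\,3\cdot 2,\dots,3(\kappa-1)$ sitting in the $k$-slot of the successive applications of Theorem~\ref{thm:all:receive}---not from a growing source set.

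Under your global split, case~(B) guarantees a correct candidate at \emph{one} level only, hence one source and one propagation window; a single invocation of Theorem~\ref{thm:all:receive} then gives one factor $\tfrac{n-1}{3(\kappa-\ell)\prProp}$, not a product. Your plan to manufacture the factorial by ``tracking how $\chi$ grows across epochs'' is not supported by the tool you cite: Theorem~\ref{thm:all:receive} bounds only the probability that \emph{all} processes have received the message after $k$ rounds; it gives no distributional control over the intermediate number of holders, so you cannot feed an increasing $\chi$ back in epoch by epoch. Even if you could, that growth would land in the $\chi$-slot of $\tfrac{n-\chi}{k\chi\prProp}$, not the $k$-slot, and would not telescope to $\tfrac{1}{(\kappa-1)!}$. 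To reach the paper's bound you need the per-level case split so that the parameter $k$ ranges over $3,6,\dots,3(\kappa-1)$.
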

\begin{proof}
Let $k\ge 0$, and assume that a correct process~$i$ certifies $k+\kappa$ 
blocks $B^1,\dots,B^k,B_{1}, \dots, B_{\kappa}$ that satisfy the following conditions: 
\begin{enumerate}
    \item $B^1,\dots,B^k,B_{1}, \dots, B_{\kappa}$ form a chain, i.e., 
    \begin{align*}
    \begin{cases}
       B^1.\height = B^2.\height-1 = \dots = B_{\kappa}.\height - k - \kappa + 1, \text{ and}
    \\ \hash{B^1} = B^2.\parentHash, \dots, \hash{B_{\kappa-1}} = B_{\kappa}.\parentHash,
    \end{cases}
    \end{align*}
    \item $B_1,\dots,B_{\kappa}$ are proposed in consecutive epochs, i.e., 
    $$B_1.\var{epoch}=B_2.\var{epoch}-1=\dots=B_{\kappa}.\var{epoch}-\kappa+1,$$
    \item If $k=0$, $B_{1}$ extends the block committed by $i$ with the greatest height; 
    otherwise, $B^1$ extends the block committed by $i$ with the greatest height, and
    \item From the viewpoint of $i$, there is no certified block $B\neq B_1$ with the same height as $B_1$.
\end{enumerate}
Consequently, process~$i$ commits $B_1$ and its ancestors by executing the function $\fn{try\_to\_commit}(\kappa)$.
A \emph{safety violation} occurs when there is at least a correct process~$j \neq i$ that observes an alternative chain of certified blocks $B_{1}', B_{2}', \dots, B_{\kappa}'$, such that
$B_{1}', B_{2}', \dots, B_{\kappa}'$ are proposed in consecutive epochs,
$B_{1}'$ has the same height as $B_1$, and 
$j$ commits $B_1'$.
Without loss of generality, we assume that the epoch in which $B_1$ is proposed is less than or equal to the epoch in which $B_1'$ is proposed.
Further, assume $B_1,\dots,B_\kappa$ are proposed in epochs~$e_1,\dots,e_\kappa$, respectively.
We can consider the following two cases:
\begin{itemize}
    \item Only Byzantine processes become candidates to vote for block~$B_2$.
    The probability of this case is provided by Lemma~\ref{lem_4TvB0G}, which is $\exp\!\left(- \delta^2\epsilon n\prVote/(2+\delta)  \right)$, where $\delta = q/(\epsilon n\prVote) -1$.
    \item At least a correct process becomes a candidate to vote for block~$B_2$.
    Note that such a correct process must certify $B_1$ before voting for $B_2$.
    Hence, at least a correct process disseminates the certified block~$B_1$, starting from round~$3e_1$.
    From round~$3e_1$ to epoch~$e_{\kappa+1}$ (the epoch that $B_\kappa$ is certified), there are $3 (\kappa - 1)$ rounds.
    Accordingly, the probability that all correct processes receive the certified block~$B_1$ by epoch~$e_{\kappa+1}$ is at least $1 - \frac{n-1}{3 (\kappa-1) \prProp}$ by Theorem~\ref{thm:all:receive}.
    Since process~$j$ commits $B_1'$, it should not receive $B_1$; this probability is equal to $\frac{n-1}{3 (\kappa-1) \prProp}$.
\end{itemize}
Hence, considering only block~$B_1$, the probability of safety violation is at most 
$$\max \left\{\exp\!\left(- \delta^2\epsilon n\prVote/(2+\delta)  \right), \frac{n-1}{3 (\kappa-1) \prProp} \right\},\quad \delta = q/(\epsilon n\prVote) -1.$$
Similarly, we can consider blocks~$B_2,\dots,B_{\kappa-1}$.
Accordingly, the probability of a safety violation is given by:
\begin{align*}
    &\max \left\{
   \exp\!\left(- \delta^2\epsilon n\prVote/(2+\delta)  \right)^{\kappa-1}, 
   \Pi_{\ell=2}^{\kappa}\frac{n-1}{3 (\ell-1) \prProp}
   \right\}
   \\&= \max \left\{
   \exp\!\left(- \delta^2\epsilon n\prVote(\kappa-1)/(2+\delta)  \right), 
   \Pi_{\ell=2}^{\kappa}\frac{n-1}{3 (\ell-1) \prProp}
   \right\}
   \\&= \max \left\{
   \exp\!\left(- \delta^2\epsilon n\prVote(\kappa-1)/(2+\delta)  \right), 
   \frac{1}{(\kappa-1)!}\left(\frac{n-1}{3\prProp}\right)^{\kappa-2}
   \right\}.
\end{align*}
\end{proof}

\section{Liveness Analysis under Synchrony and Partial Synchrony}
This section presents the same liveness analysis for both the synchronous and partially synchronous models.

\begin{lem}\label{lem:percentage:0}
    In any random sample where each process is included independently with probability $\prSample$, there are at least $(1-\varphi)(1 - \epsilon) n \prSample$ correct processes with probability at least
    $1- 2/\bm{(}\varphi^2(1 - \epsilon)n \prSample\bm{)}$, where $\varphi\in(0,1)$.
\end{lem}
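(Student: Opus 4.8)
The plan is to apply the lower-tail Chernoff bound (inequality~\eqref{ineq:chernof2}) to the number of correct processes that land in the sample. First I would let $Z = \sum_{i\in\text{correct}} Z_i$, where $Z_i$ is the indicator that correct process~$i$ is included in the sample; since there are $(1-\epsilon)n$ correct processes and each is included independently with probability~$\prSample$, the $Z_i$ are independent Bernoulli variables and $\EX{Z} = (1-\epsilon)n\prSample$. This is exactly the setting of inequality~\eqref{ineq:chernof2}.

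Next I would instantiate the bound with $\delta = \varphi \in (0,1)$, obtaining
\[
\Pr\big(Z \le (1-\varphi)(1-\epsilon)n\prSample\big)
\le \exp\!\big(-\varphi^2(1-\epsilon)n\prSample/2\big).
\]
Then I would weaken the right-hand side using the elementary inequality $e^{-x} \le 1/x$ for $x > 0$, with $x = \varphi^2(1-\epsilon)n\prSample/2$, which gives the clean bound $2/\bm{(}\varphi^2(1-\epsilon)n\prSample\bm{)}$. Taking complements yields that $Z \ge (1-\varphi)(1-\epsilon)n\prSample$ with probability at least $1 - 2/\bm{(}\varphi^2(1-\epsilon)n\prSample\bm{)}$, which is the claim.

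There is essentially no hard part here; the lemma is a direct one-line application of Chernoff followed by the $e^{-x}\le 1/x$ relaxation, entirely parallel to the second bound in Theorem~\ref{thm:all:receive}. The only mild subtlety is bookkeeping: making sure the count is over correct processes only (so the relevant expectation is $(1-\epsilon)n\prSample$, not $n\prSample$), and that $\varphi \in (0,1)$ is exactly the range required for~\eqref{ineq:chernof2}. I expect this lemma to be used downstream to guarantee that first- and second-layer samples, as well as the set of candidates, contain enough correct processes with overwhelming probability when $\prSample = O(1/\sqrt{n})$, so that $n\prSample = O(\sqrt n)$ makes the failure term $O(1/\sqrt n)$ — small but, notably, not itself exponentially small, which is presumably why the subsequent liveness argument still needs the $\kappa$-fold amplification.
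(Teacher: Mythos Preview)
Your proposal is correct and essentially identical to the paper's proof: define the count of correct processes in the sample as a sum of independent Bernoulli indicators with mean $(1-\epsilon)n\prSample$, apply the lower-tail Chernoff bound~\eqref{ineq:chernof2} with $\delta=\varphi$, and then relax $\exp(-x)$ to $1/x$ (the paper writes it as $e^{-x}<1/(1+x)<1/x$) to obtain the stated $2/\bm{(}\varphi^2(1-\epsilon)n\prSample\bm{)}$ tail.
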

\begin{proof}
    Let $C$ denote the set of correct processes, so $|C| = (1 - \epsilon)n$. 
    Besides, let $C_1 \subseteq C$ denote the subset of correct processes included in a random sample.
    Since each correct process is included in the random sample independently with probability $\prSample$, $|C_1| \sim \mathrm{Bin}(|C|,\prSample)$.
    Therefore, the expected number of correct processes included in the random sample is 
    \begin{align*}
        \mathbb{E}[|C_1|] = (1 - \epsilon)n \cdot \prSample.
    \end{align*}
    Using the Chernoff bound~\ref{ineq:chernof2}, for any $\varphi\in(0,1)$, we have:
    \begin{align*}
        \Pr\left(|C_1| < (1-\varphi)\mathbb{E}[|C_1|]\right) 
        \le \exp\left(-\frac{\varphi^2\mathbb{E}[|C_1|]}{2} \right) 
        &= \exp\left(-\frac{\varphi^2(1 - \epsilon)n \prSample }{2} \right)
        \\& < 
        \frac{2}{\varphi^2(1 - \epsilon)n \prSample}.
    \end{align*}
    The last line holds as $e^{-x} < 1/(1+x) < 1/x$ for any $x>0$.
    Accordingly, there are at least $(1-\varphi)(1 - \epsilon)n \prSample$ correct processes with probability at least 
    $1- 2/\bm{(}\varphi^2(1 - \epsilon)n \prSample\bm{)}$, where $\varphi\in(0,1)$.
\end{proof}

\begin{cor}\label{cor:percentage:0}
    In any random sample where each process is included independently with probability $\prSample$, there are at least $(1 - \epsilon) n \prSample /2$ correct processes with probability at least
    $1- 8/\bm{(}(1 - \epsilon)n \prSample\bm{)}$.
\end{cor}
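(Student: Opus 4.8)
The plan is to obtain this corollary as an immediate specialization of Lemma~\ref{lem:percentage:0}. That lemma is stated for an arbitrary deviation parameter $\varphi\in(0,1)$, giving a lower bound of $(1-\varphi)(1-\epsilon)n\prSample$ correct processes in the sample with failure probability at most $2/\bm{(}\varphi^2(1-\epsilon)n\prSample\bm{)}$. The only work here is to pick the ``nice'' value $\varphi = 1/2$ and simplify.

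Concretely, I would substitute $\varphi = 1/2$ into both expressions. For the sample-size bound, $(1-\varphi)(1-\epsilon)n\prSample = \tfrac12(1-\epsilon)n\prSample = (1-\epsilon)n\prSample/2$, which matches the claimed count. For the probability bound, $\varphi^2 = 1/4$, so $2/\bm{(}\varphi^2(1-\epsilon)n\prSample\bm{)} = 8/\bm{(}(1-\epsilon)n\prSample\bm{)}$, hence the event holds with probability at least $1 - 8/\bm{(}(1-\epsilon)n\prSample\bm{)}$, as stated. Since $1/2\in(0,1)$, the hypothesis of Lemma~\ref{lem:percentage:0} is satisfied, so no additional argument is needed.

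There is no real obstacle: the corollary is purely a restatement of Lemma~\ref{lem:percentage:0} with a convenient constant chosen to avoid carrying $\varphi$ through the later liveness computations. If anything warrants a word of care, it is only noting that the bound is vacuous (the probability guarantee is nonpositive) unless $(1-\epsilon)n\prSample > 8$, i.e., the expected number of correct processes in a sample is large enough; with the parameter regime $\prSample = O(1/\sqrt{n})$ used elsewhere in the paper this holds for sufficiently large $n$, and I would mention this in passing so the corollary is applied only in that regime.

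\begin{proof}
Apply Lemma~\ref{lem:percentage:0} with $\varphi = \tfrac12 \in (0,1)$. Then the guaranteed number of correct processes in the sample is $(1-\varphi)(1-\epsilon)n\prSample = (1-\epsilon)n\prSample/2$, and the failure probability is at most $2/\bm{(}\varphi^2(1-\epsilon)n\prSample\bm{)} = 8/\bm{(}(1-\epsilon)n\prSample\bm{)}$, which yields the claim.
\end{proof}
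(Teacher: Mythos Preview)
Your proposal is correct and matches the paper's approach exactly: the paper's proof simply says the corollary follows directly from Lemma~\ref{lem:percentage:0}, and you have made explicit the obvious choice $\varphi = 1/2$ that instantiates the lemma to yield the stated bounds.
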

\begin{proof}
    This corollary directly follows from Lemma~\ref{lem:percentage:0}.
\end{proof}

\begin{lem}\label{thm:percentage}
    If a leader is correct, then at least an $a$-fraction of correct processes become candidates to vote for the leader's proposal with a probability of at least 
    $1 - \frac{2}{a\delta^2(1 - \epsilon)n} - \frac{2}{\varphi^2(1 - \epsilon)n\prSample}$, where $\varphi,\delta\in(0,1)$, and
    $a < 1-\exp\bm{(}-(1-\varphi)(1 - \epsilon)n\prSample^2/(1-\prSample)\bm{)}$. 
\end{lem}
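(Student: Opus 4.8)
The plan is to lower-bound $Z$, the number of correct processes that have received the leader's proposal by the end of the disseminate round. Since the leader is correct there is no valid epoch-$e$ proposal conflicting with it, so a correct process becomes a candidate exactly when it receives this proposal, and any extra deliveries made by Byzantine relays only increase $Z$. Hence it is enough to count the correct processes covered by the first-layer sample $F$ together with the second-layer samples $S_i$ of the correct members $i\in F$, and I would proceed in two stages mirroring the two dissemination hops, followed by a concentration argument.

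First, apply Lemma~\ref{lem:percentage:0} to $F$ with slack $\varphi$: with probability at least $1-2/(\varphi^2(1-\epsilon)n\prSample)$ the sample $F$ contains at least $m:=(1-\varphi)(1-\epsilon)n\prSample$ correct processes; condition on this event and on the realized set $F$. Let $C$ be the set of correct processes. For a fixed $j\in C$: if $j\in F$ then $j$ is a candidate deterministically; if $j\notin F$ then $j$ fails to be a candidate only if every correct relay in $F$ omits $j$ from its second-layer sample, which---by the independence of the second-layer VRF coins and their independence from the first-layer sampling---has probability $(1-\prSample)^{|F\cap C|}\le(1-\prSample)^m$. Thus, conditioned on $F$, the indicators ``$j$ is a candidate'' over $j\in C$ are independent Bernoulli variables, each with success probability at least $1-(1-\prSample)^m$, which I would bound by $1-\exp(-(1-\varphi)(1-\epsilon)n\prSample^2/(1-\prSample))$. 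Consequently $\mathbb{E}[Z\mid F]\ge(1-\epsilon)n\,(1-(1-\prSample)^m)>a(1-\epsilon)n$, the strict inequality being exactly the hypothesis on $a$.

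Since $Z$ is (conditionally) a sum of independent Bernoulli variables with $\mathbb{E}[Z\mid F]\ge a(1-\epsilon)n$, the lower-tail Chernoff bound~\eqref{ineq:chernof2} with deviation $\delta$ gives $\Pr(Z<a(1-\epsilon)n\mid F)\le\exp(-\delta^2\mathbb{E}[Z\mid F]/2)<2/(\delta^2\mathbb{E}[Z\mid F])\le 2/(a\delta^2(1-\epsilon)n)$, using $e^{-x}<1/x$ and then $\mathbb{E}[Z\mid F]\ge a(1-\epsilon)n$ in the denominator. A union bound that also removes the Stage-1 conditioning then yields the claimed lower bound $1-2/(a\delta^2(1-\epsilon)n)-2/(\varphi^2(1-\epsilon)n\prSample)$.

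The most delicate part is the bookkeeping around the conditioning: Stage~2 must be phrased so that, once $F$ is fixed, the candidate-indicators genuinely reference disjoint blocks of (VRF-derived) second-layer coins and are therefore independent, and the constants must line up so that the hypothesis $a<1-\exp(-(1-\varphi)(1-\epsilon)n\prSample^2/(1-\prSample))$ is precisely what forces $\mathbb{E}[Z\mid F]>a(1-\epsilon)n$ and places the Chernoff threshold at the $a$-fraction. A secondary point worth spelling out is why a correct leader genuinely prevents an honest process from holding a conflicting epoch-$e$ proposal, so that Byzantine over-delivery can only raise $Z$.
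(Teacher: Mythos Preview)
Your proposal follows the same two-stage argument as the paper: invoke Lemma~\ref{lem:percentage:0} to lower-bound the number of correct first-layer members, then apply the lower-tail Chernoff bound~\eqref{ineq:chernof2} to the (conditionally independent) second-layer coverage indicators, and combine via a union bound. One point to tighten: in the paper $\delta$ is \emph{not} a free parameter but is fixed as $\delta = 1 - a(1-\epsilon)n/\mu$ with $\mu=(1-\epsilon)n\bigl(1-(1-\prSample)^m\bigr)$, so that the Chernoff threshold $(1-\delta)\mu$ lands exactly at $a(1-\epsilon)n$; your line ``Chernoff with deviation $\delta$ gives $\Pr(Z<a(1-\epsilon)n\mid F)\le\exp(-\delta^2\mathbb{E}[Z\mid F]/2)$'' is only valid once you impose this coupling, since $\mathbb{E}[Z\mid F]>a(1-\epsilon)n$ alone does not give $(1-\delta)\mathbb{E}[Z\mid F]\ge a(1-\epsilon)n$ for an arbitrary $\delta\in(0,1)$.
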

\begin{proof}
    In any epoch~$e$, a process becomes a candidate to vote for the leader's proposal if it receives that proposal from some process in the first-layer sample (i.e., the sample selected by the leader).
    Let $C$ denote the set of correct processes, so $|C| = (1 - \epsilon)n$. 
    Besides, let $C_1 \subseteq C$ denote the subset of correct processes selected by the leader in the first-layer sample. 
    By Lemma~\ref{lem:percentage:0}, $|C_1| < (1-\varphi)(1 - \epsilon) n \prSample$ with probability at most $2/\bm{(}\varphi^2(1 - \epsilon) n \prSample \bm{)}$, where $\varphi\in(0,1)$.
    
    Each correct process in $C_1$ selects a second-layer sample by including each process independently with probability $\prSample$. 
    Consider a correct process~$i \in C_1$.
    The probability that a process~$j$ is not included in the second-layer sample of $i$ is equal to $1 - \prSample$, and since these samples are independent across all members of $C_1$, the probability that $j$ is not included in any second-layer sample is equal to $(1 - \prSample)^{|C_1|}$.
    Let $I_j$ be the indicator random variable representing that $j$ is included in any second-layer sample and define $X = \sum_{j\in C}I_j$ (i.e., $X$ is the number of correct processes that are included in at least one second-layer sample selected by a correct process in the first-layer sample.)
    We must compute a lower bound for $\Pr(X \ge a\cdot (1-\epsilon)n)$.
    Using the linearity of expectation, we have:
    \begin{align*}
        \mathbb{E}[X] 
        = (1 - \epsilon)n \cdot \left( 1 - (1 - \prSample)^{|C_1|} \right).
    \end{align*}
    Let $\mu = (1 - \epsilon)n \cdot ( 1 - (1 - \prSample)^{(1-\varphi)(1 - \epsilon) n \prSample} )$, 
    and assume that $a < \mu / ((1 - \epsilon)n)$.
    Further, let $E$ be the event that $|C_1| \ge (1-\varphi)(1 - \epsilon)n \prSample$.
    By applying the Chernoff bound~\ref{ineq:chernof2}, we have:
    \begin{align}\label{eq:lem:a}
        \begin{split}
        &\Pr(X \le a\cdot (1-\epsilon)n \,\big|\, E)
        \\&\le \exp\left(-\frac{\delta^2\mu}{2}\right) 
        \\&\le \frac{2}{\delta^2\mu} 
        \qquad\qquad\qquad (\exp(-x) < 1/(1+x) < 1/x \text{ for any } x>0)
        \\&< \frac{2}{\delta^2a(1 - \epsilon)n} \qquad\ \ (a < \mu / ((1 - \epsilon)n)),
        \end{split}
    \end{align}
    where $\delta = 1 - a\cdot (1-\epsilon)n/\mu$.
    Note that because $a < \mu / ((1 - \epsilon)n)$, it follows that $\delta \in (0,1)$.
    Also, note that:
    \begin{align*}
    &
    \begin{cases}
        \mu = (1 - \epsilon)n \cdot ( 1 - (1 - \prSample)^{(1-\varphi)(1 - \epsilon) n \prSample} )
        \\
        1 - (1 - \prSample)^{(1-\varphi)(1 - \epsilon) n \prSample} \le 1-\exp\left(-(1-\varphi)(1 - \epsilon)n\prSample^2/(1-\prSample)\right)
        \\
        a < \mu / ((1 - \epsilon)n)
    \end{cases}
    \\&
    \implies a < 1-\exp\left(-(1-\varphi)(1 - \epsilon)n\prSample^2/(1-\prSample)\right)
    \end{align*}
    Consequently, we have:
    \begin{align*}
        &\Pr(X \le a\cdot (1-\epsilon)n) 
        \\& = \Pr(X \le a\cdot (1-\epsilon)n \,\big|\, E) \Pr(E) + \Pr(X \le a\cdot (1-\epsilon)n \,\big|\, \bar{E}) \Pr(\bar{E})
        \\& \le \Pr(X \le a\cdot (1-\epsilon)n \,\big|\, E) + \Pr(\bar{E})
        \\& \le \exp\left(-\frac{\delta^2\mu}{2}\right) + \frac{2}{\varphi^2(1 - \epsilon)n\prSample}
        \\& \le \frac{2}{a\delta^2(1 - \epsilon)n} + \frac{2}{\varphi^2(1 - \epsilon)n\prSample}.
    \end{align*}
    The last line holds due to \eqref{eq:lem:a} and Lemma~\ref{lem:percentage:0}.
    Thus,
    \begin{align*}
        \Pr(X \ge a\cdot (1-\epsilon)n) 
        &\ge 1 - \frac{2}{a\delta^2(1 - \epsilon)n} - \frac{2}{\varphi^2(1 - \epsilon)n\prSample}.
    \end{align*}
\end{proof}

\begin{cor}\label{cor:percentage}
    Suppose $\prSample \ge 2/\sqrt{n}$.
    If a leader is correct, then at least an $0.6321$-fraction of correct processes become candidates to vote for the leader's proposal with a probability of at least 
    $$1 - 
    \frac{13}{(1 - \epsilon)n} - 
    \frac{8}{(1 - \epsilon)n \prSample}.$$
\end{cor}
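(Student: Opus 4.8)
The plan is to derive Corollary~\ref{cor:percentage} directly from Lemma~\ref{thm:percentage} by fixing that lemma's free parameters, namely $\varphi$ and (implicitly) the target fraction $a$. First I would set $\varphi = \tfrac{1}{2}$. This immediately turns the second error term $\tfrac{2}{\varphi^2(1-\epsilon)n\prSample}$ of Lemma~\ref{thm:percentage} into $\tfrac{8}{(1-\epsilon)n\prSample}$, which is exactly the second term claimed in the corollary, so no further work is needed there.

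Next I would take $a = 0.6321$ and check that it is admissible, i.e., that $a < \mu/((1-\epsilon)n) = 1 - (1-\prSample)^{(1-\varphi)(1-\epsilon)n\prSample}$, which is the hypothesis the proof of Lemma~\ref{thm:percentage} actually relies on (the displayed condition $a < 1-\exp(-(1-\varphi)(1-\epsilon)n\prSample^2/(1-\prSample))$ is only a consequence of it). This is where $\prSample \ge 2/\sqrt{n}$ enters: it gives $n\prSample^2 \ge 4$, hence the exponent satisfies $(1-\varphi)(1-\epsilon)n\prSample \cdot \prSample = \tfrac12(1-\epsilon)n\prSample^2 \ge 2(1-\epsilon) \ge 1$ for $\epsilon < \tfrac12$; applying $(1-\prSample)^t \le e^{-t\prSample}$ then yields $\mu/((1-\epsilon)n) \ge 1-e^{-1} > 0.6321 = a$. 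So $a$ is admissible, and the associated Chernoff slack $\delta = 1 - a(1-\epsilon)n/\mu \in (0,1)$ is well-defined.

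The last step is pure substitution: plug $\varphi = \tfrac12$ and $a = 0.6321$ into the bound $1 - \tfrac{2}{a\delta^2(1-\epsilon)n} - \tfrac{2}{\varphi^2(1-\epsilon)n\prSample}$ of Lemma~\ref{thm:percentage}, and reduce the first term to $\tfrac{13}{(1-\epsilon)n}$ using $a \le 1$ together with the lower bound on $\mu$ (hence on $\delta$) established in the previous step. That reduction is a short numerical estimate rather than a structural argument.

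I expect the first error term to be the delicate point. Because the guaranteed fraction $a \approx 1 - 1/e$ is chosen essentially as large as the admissibility constraint permits, the slack $\delta$ can be small, so $2/\delta^2$ is sizable; pinning down the stated constant requires a clean quantitative lower bound on $\mu/((1-\epsilon)n)$ — and, where the corollary is invoked, using that $\epsilon$ is bounded away from $\tfrac12$ keeps $\mu/((1-\epsilon)n)$ comfortably above $a$. Everything else is routine substitution into Lemma~\ref{thm:percentage}.
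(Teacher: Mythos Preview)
Your route is the paper's route: invoke Lemma~\ref{thm:percentage} with $\varphi=\tfrac12$ and $a=0.6321$, and verify admissibility of $a$ via $\prSample\ge 2/\sqrt{n}$ and $\epsilon<\tfrac12$. The paper phrases the $\varphi=\tfrac12$ choice as an appeal to Corollary~\ref{cor:percentage:0}, but that is the same thing.

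The one substantive divergence is your handling of $\delta$. The paper reads $\delta$ as a \emph{free} parameter in $(0,1)$, exactly as Lemma~\ref{thm:percentage}'s statement presents it, and simply sets $\delta=\tfrac12$ alongside $\varphi=\tfrac12$. That instantly gives the first error term as $\tfrac{2}{a\cdot(1/2)^2\cdot(1-\epsilon)n}=\tfrac{8}{a(1-\epsilon)n}$, and then $8/0.6321\approx 12.66<13$ finishes the numerics in one line. You instead treat $\delta=1-a(1-\epsilon)n/\mu$ as determined (tracking the Lemma's \emph{proof} rather than its statement), and this is precisely why your ``delicate point'' appears: with only $\mu/((1-\epsilon)n)\ge 1-e^{-1}$ and $a=0.6321$, the guaranteed $\delta$ is essentially $0$, so $2/(a\delta^2(1-\epsilon)n)$ cannot be bounded by $13/((1-\epsilon)n)$ without an extra hypothesis such as $\epsilon$ bounded strictly away from $\tfrac12$, which the corollary does not assume. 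In short, the paper's derivation is trivial once you take $\delta$ as a second free knob; your version, as written, has a genuine gap at the constant~$13$ unless you adopt that reading.
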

\begin{proof}
    By Corollary~\ref{cor:percentage:0} and Lemma~\ref{thm:percentage}, at least an $a$-fraction of correct processes become candidates to vote for the leader's proposal with a probability of at least 
    $$1 - 
    \frac{8}{a(1 - \epsilon)n} - 
    \frac{8}{(1 - \epsilon)n \prSample},$$
    where $a < 1-\exp(-0.5(1 - \epsilon)n\prSample^2/(1-\prSample))$.
    Since $\epsilon < 1/2$ and $\prSample \ge 2/\sqrt{n}$, choosing $a = 0.6321$ satisfies the required condition on $a$. 
    Hence, for the desired probability, we have:
    $$1 - 
    \frac{8}{0.6321(1 - \epsilon)n} - 
    \frac{8}{(1 - \epsilon)n \prSample} 
    \ge 1 - 
    \frac{13}{(1 - \epsilon)n} - 
    \frac{8}{(1 - \epsilon)n \prSample}.$$
\end{proof}

\begin{lem}\label{lem:qc}
    Suppose the leader of epoch~$e-1$ is correct, it proposes a valid value, and every correct process that receives the proposal evaluates it as valid.
    Then, the leader of epoch~$e$ receives at least $q$ votes (i.e., receives a quorum certificate) with a probability of at least
    \begin{align*}
        \left( 1 - \frac{13}{(1 - \epsilon)n} - \frac{8}{(1 - \epsilon)n \prSample} \right) \cdot \left( 1 - \frac{2}{\theta^2 \mu} \right),
    \end{align*}
    where $\mu = 0.6321(1-\epsilon)n \cdot \prVote$, and
    $\theta = 1 - q / \mu$.
\end{lem}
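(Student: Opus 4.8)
The plan is to combine Corollary~\ref{cor:percentage}, which controls how many processes become candidates, with a Chernoff bound on how many of those candidates actually vote. First I would let $E$ denote the event that at least a $0.6321$-fraction of the correct processes become candidates to vote for the proposal of the (correct) leader of epoch~$e-1$. Since that leader is correct, proposes a valid value, and every correct process that receives the proposal deems it valid, Corollary~\ref{cor:percentage} (together with the standing assumption $\prSample \ge 2/\sqrt{n}$) gives $\Pr(E) \ge 1 - \tfrac{13}{(1-\epsilon)n} - \tfrac{8}{(1-\epsilon)n\prSample}$. A key point to record here is that $E$ is determined solely by the sampling randomness of the leader and of the first-layer sample members, hence it is independent of the VRF coins the candidates later use inside \fn{can\_vote} to decide whether to vote.

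Next I would condition on $E$. Each correct candidate independently gets heads with probability $\prVote$ in its \fn{can\_vote} coin toss, and in the best-case scenario every correct candidate whose toss succeeds sends a \textsc{vote} message for the proposal to the leader of epoch~$e$ (the remaining \fn{can\_vote} conditions---no conflicting certified block of height $\ge h$, all predecessors certified, and not having already voted in the epoch---hold under the best-case assumptions). Letting $X$ be the number of votes cast by correct candidates, conditioned on $E$ the variable $X$ stochastically dominates $Y \sim \mathrm{Bin}\bm{(}0.6321(1-\epsilon)n,\prVote\bm{)}$, whose mean is exactly $\mu = 0.6321(1-\epsilon)n\,\prVote$. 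I would then apply the lower-tail Chernoff bound~\eqref{ineq:chernof2} with $\theta = 1 - q/\mu$, obtaining $\Pr(X \le q \mid E) \le \Pr\bm{(}Y \le (1-\theta)\mu\bm{)} \le \exp(-\theta^2\mu/2) \le 2/(\theta^2\mu)$, where the last step reuses $e^{-x} < 1/x$ for $x>0$ exactly as in Lemmas~\ref{lem:percentage:0} and~\ref{thm:percentage}.

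Finally I would assemble the pieces: $\Pr(X \ge q) \ge \Pr(E)\cdot\Pr(X \ge q \mid E) \ge \bm{(}1 - \tfrac{13}{(1-\epsilon)n} - \tfrac{8}{(1-\epsilon)n\prSample}\bm{)}\bm{(}1 - \tfrac{2}{\theta^2\mu}\bm{)}$, and note that because these votes originate from correct processes they are delivered (under synchrony, or after GST under partial synchrony) to the leader of epoch~$e$, so it receives at least $q$ votes and can aggregate them into a quorum certificate. The obstacles I anticipate are bookkeeping rather than conceptual: (i) making the independence between $E$ and the voting coins fully precise so that $X\mid E$ is genuinely a sum of independent Bernoullis eligible for Chernoff; (ii) stating the stochastic-domination step cleanly (adding candidates never decreases the vote count); and (iii) flagging that the stated bound is informative only when $q < \mu$, so that $\theta \in (0,1)$---when $q \ge \mu$ the second factor is non-positive and the inequality is vacuously true.
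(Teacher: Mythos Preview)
Your proposal is correct and follows essentially the same route as the paper's own proof: define $E$ via Corollary~\ref{cor:percentage}, apply the lower-tail Chernoff bound~\eqref{ineq:chernof2} with $\theta = 1 - q/\mu$ conditioned on $E$, weaken $\exp(-\theta^2\mu/2)$ to $2/(\theta^2\mu)$, and combine via $\Pr(X \ge q) \ge \Pr(E)\Pr(X \ge q \mid E)$. Your treatment is in fact slightly more careful than the paper's (you make the independence of $E$ from the voting coins explicit, phrase the comparison as stochastic domination by a binomial, and flag the $q < \mu$ requirement for $\theta \in (0,1)$), but the structure and the ingredients are identical.
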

\begin{proof}
    In a given epoch~$e-1$, since Byzantine processes might remain silent, at least $n-f = (1-\epsilon)n$ processes participate in the protocol out of the total $n$ processes.
    By assumption, the leader of epoch~$e-1$ is correct.
    Therefore, by Corollary~\ref{cor:percentage}, an $0.6321$-fraction of correct processes become candidates to vote in epoch~$e-1$ with a probability of at least $1 - \frac{13}{(1 - \epsilon)n} - \frac{8}{(1 - \epsilon)n \prSample}$.
    Each of the $0.6321(1-\epsilon)n$ processes votes with probability $\prVote$.
    Let $X$ be the number of votes that the leader of epoch~$e$ receives.
    We have:
    \begin{align*}
        \mu := \mathbb{E}[X] = 0.6321(1-\epsilon)n \cdot \prVote.
    \end{align*}
    Let $E$ denote the event that at least an $0.6321$-fraction of correct processes become candidates.
    By applying the Chernoff bound~\ref{ineq:chernof2}, we have:
    \begin{align*}
        \Pr(X \le q \mid E) \le \exp\!\left( - \theta^2\mu/2  \right) \le \frac{2}{\theta^2\mu},
    \end{align*}
    where $\theta = 1 - q/\mu$.
    Hence,
    \begin{align}\label{eq:liveness:1}
        \Pr(X \ge q \mid E) 
        \ge 1 - \frac{2}{\theta^2 \mu}.
    \end{align}
    Accordingly, we have:
    \begin{align*}
        &\Pr(X \ge q)
        = \Pr(X \ge q \mid E)\Pr(E) + \Pr(X \ge q \mid \bar{E})\Pr(\bar{E})
        \\&\implies
        \Pr(X \ge q) \ge \Pr(X \ge q \mid E)\Pr(E)
        \\&\implies
        \Pr(X \ge q) \ge 
        \left( 1 - \frac{2}{\theta^2 \mu} \right) \cdot
        \left( 1 - \frac{13}{(1 - \epsilon)n} - \frac{8}{(1 - \epsilon)n \prSample} \right).
    \end{align*}
    The last line holds due to \eqref{eq:liveness:1} and Corollary~\ref{cor:percentage}.
\end{proof}

\begin{cor}\label{cor_liveness_5}
    Suppose the leader of epoch~$e-1$ is correct, it proposes a valid value, and every correct process that receives the proposal evaluates it as valid.
    Then, the leader of epoch~$e$ receives at least $q$ votes (i.e., receives a quorum certificate) with a probability of at least
    \begin{align*}
        \left( 1 - \frac{13}{(1-\epsilon)n \cdot \prVote} \right) \cdot 
        \left( 1 - \frac{13}{(1 - \epsilon)n} - \frac{8}{(1 - \epsilon)n \prSample} \right),
    \end{align*}
    where $q = 0.31605(1-\epsilon)n \cdot \prVote$.
\end{cor}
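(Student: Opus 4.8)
The plan is to obtain this corollary as a direct numeric specialization of Lemma~\ref{lem:qc}. First I would substitute the prescribed value $q = 0.31605\,(1-\epsilon)n\,\prVote$ into the quantities appearing in that lemma. Since there $\mu = 0.6321\,(1-\epsilon)n\,\prVote$ and $\theta = 1 - q/\mu$, the key observation is the arithmetic identity $0.31605 = 0.6321/2$, so that $q/\mu = 1/2$ and hence $\theta = 1/2$. In particular $\theta \in (0,1)$, so the regime in which the Chernoff bound was invoked in the proof of Lemma~\ref{lem:qc} is exactly the one we are in, and the conclusion of that lemma applies verbatim for this choice of $q$.

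Next I would simplify the factor $1 - \frac{2}{\theta^2\mu}$ appearing in the bound of Lemma~\ref{lem:qc}. With $\theta^2 = 1/4$ we get $\frac{2}{\theta^2\mu} = \frac{8}{\mu} = \frac{8}{0.6321\,(1-\epsilon)n\,\prVote}$, and since $8/0.6321 < 13$ this is at most $\frac{13}{(1-\epsilon)n\,\prVote}$. Therefore $1 - \frac{2}{\theta^2\mu} \ge 1 - \frac{13}{(1-\epsilon)n\,\prVote}$. Carrying over the other factor $1 - \frac{13}{(1-\epsilon)n} - \frac{8}{(1-\epsilon)n\,\prSample}$ unchanged and multiplying the two factors---using that this factor is nonnegative in the parameter regime the statement targets, so that replacing the second factor by a smaller nonnegative quantity only weakens the lower bound---yields precisely the bound claimed in the corollary.

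The main point is that there is no genuine obstacle here: the whole argument reduces to the identity $q/\mu = 1/2$ together with the constant bound $8/0.6321 < 13$. The only subtlety worth a sentence is the sign of the factor $1 - \frac{13}{(1-\epsilon)n} - \frac{8}{(1-\epsilon)n\,\prSample}$: when it is negative the claimed lower bound is non-positive, and the trivial bound $\Pr(\cdot)\ge 0$ already suffices, so the statement holds in all cases.
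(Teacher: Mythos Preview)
Your proposal is correct and matches the paper's approach: the paper's proof is the single sentence ``This corollary directly follows from Lemma~\ref{lem:qc},'' and your argument is precisely the numeric specialization one would write out to justify that sentence (setting $q=\mu/2$ so $\theta=1/2$, then using $8/0.6321<13$). The extra care you take with the sign of the first factor is not addressed in the paper but is a reasonable addition.
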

\begin{proof}
    This corollary directly follows from Lemma~\ref{lem:qc}.
\end{proof}

\begin{thm}
Suppose there are $\kappa+2$ consecutive epochs $e_1,\dots,e_{\kappa+2}$, each with a correct leader (equivalently, each process remains leader for $\kappa+2$ epochs).
Then, at least a block is committed in epoch~$e_{\kappa+2}$ with probability:
\begin{align*}
    \left(  
    \left( 1 - \frac{13}{(1-\epsilon)n \cdot \prVote} \right) \cdot 
        \left( 1 - \frac{13}{(1 - \epsilon)n} - \frac{8}{(1 - \epsilon)n \prSample} \right)
    \cdot \left( 1 - \frac{n-1}{3  \prProp} \right)\right)^\kappa.
\end{align*}
\end{thm}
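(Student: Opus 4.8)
The plan is to chain together the per-epoch guarantees already established and count the independent ``good events'' that must occur across the $\kappa+2$ correct-leader epochs for a block to be committed in epoch~$e_{\kappa+2}$. First I would fix notation: let the correct leaders of epochs $e_1,\dots,e_{\kappa+2}$ propose blocks $B_1,\dots,B_{\kappa+2}$, each extending the previous certified block, so that $B_1,\dots,B_\kappa$ become the $\kappa$ consecutive certified blocks at consecutive heights that allow a client calling $\fn{get\_ledger}(\kappa)$ to finalize $B_1$. The commit of $B_1$ in epoch~$e_{\kappa+2}$ requires, for each $\ell \in \{1,\dots,\kappa\}$: (i) every correct process can participate in voting for $B_{\ell+1}$, i.e.\ it has already received the certified blocks $B_1,\dots,B_\ell$ via the propagation sub-protocol; (ii) the leader of epoch~$e_{\ell+2}$ actually collects a quorum certificate on $B_{\ell+1}$, so that $B_{\ell+1}$ (and hence $B_\ell$ earlier) is certified. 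Wait --- more carefully: we need $B_1,\dots,B_\kappa$ all certified and chained in consecutive epochs/heights; $B_\ell$ is certified once the leader of $e_{\ell+1}$ gets $q$ votes on it, for $\ell=1,\dots,\kappa$, and additionally $B_1$ must have no competing certified block at its height, which holds in the best case since all leaders are correct and propagation of $B_1$ reaches everyone.

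Next I would bound the probability of each good event. For the quorum-certificate events I invoke Corollary~\ref{cor_liveness_5}: given a correct leader in epoch~$e_{\ell}$ proposing a valid block that all correct receivers accept, the leader of epoch~$e_{\ell+1}$ obtains $q$ votes with probability at least
$\left( 1 - \frac{13}{(1-\epsilon)n \cdot \prVote} \right)\left( 1 - \frac{13}{(1 - \epsilon)n} - \frac{8}{(1 - \epsilon)n \prSample} \right)$.
For the propagation events I invoke Theorem~\ref{thm:all:receive}: once $B_\ell$ is certified, at least one correct candidate (which exists by the corollaries behind Corollary~\ref{cor_liveness_5}) begins disseminating it, and with at least $3$ rounds of propagation available before the relevant later epoch, all correct processes receive it with probability at least $1 - \frac{n-1}{3\prProp}$ --- I would use the crude one-initiator, $k=1$ epoch ($3$ rounds) bound, matching the factor $\left(1 - \frac{n-1}{3\prProp}\right)$ appearing in the statement. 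Then I multiply: there are $\kappa$ ``layers,'' each contributing one quorum-certificate factor and one propagation factor, so in the best case the commit probability is at least the product of $\kappa$ identical triples, giving the claimed expression.

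The main obstacle --- and the step I would be most careful about --- is justifying the \emph{independence} (or at least the correct conditioning) that lets the $\kappa$ triples be multiplied without cross terms. The quorum-certificate event in layer $\ell$ depends on the VRF coin flips $\fn{local\_coin}(\prVote,\cdot)$ in epoch $e_{\ell}$ and on the sampling coins of the leader and first-layer sample in epoch $e_\ell$; the propagation event for $B_\ell$ depends on the propagation coins $\fn{local\_coin}(\prProp,\cdot)$ across a disjoint window of rounds; and events for distinct $\ell$ use disjoint seeds ($e_\ell$ vs.\ $e_{\ell+1}$ for voting, and disjoint round indices for propagation). So the per-layer events are functions of disjoint blocks of the independent local-randomness source, hence jointly independent, and the product bound is legitimate; I would also need the mild observation that conditioning on the earlier layers' good events only restricts to histories in which the later leaders' proposals are valid and accepted by all correct processes --- exactly the hypothesis required by Corollary~\ref{cor_liveness_5} and Theorem~\ref{thm:all:receive} --- so each conditional factor is still at least the stated bound. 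A secondary subtlety is the edge bookkeeping: confirming that $\kappa+2$ correct leaders suffice (one to propose $B_1$, $\kappa$ more to certify $B_1,\dots,B_\kappa$ in turn, and enough trailing rounds for the last propagation) and that no competing block at height $B_1.\height$ arises in the best case, which I would dispatch in a sentence using that all leaders in the window are correct and honest propagation is monotone.
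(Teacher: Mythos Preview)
Your approach is essentially the paper's: decompose the commit event into $\kappa$ per-layer ``good events,'' each consisting of one propagation bound (Theorem~\ref{thm:all:receive} with $k=3$, $\chi\ge 1$) and one quorum-certificate bound (Corollary~\ref{cor_liveness_5}), and multiply. Your discussion of independence via disjoint randomness seeds is in fact more explicit than the paper's proof, which simply multiplies the factors without comment.

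There is, however, one bookkeeping point where your indexing diverges from the paper and creates a small gap. You let the leader of $e_1$ already propose $B_1$ and then plan to invoke Corollary~\ref{cor_liveness_5} to certify it. But that corollary requires that \emph{every correct process that receives the proposal evaluates it as valid}, and by \texttt{valid\_proposal} this means each such process must already hold $B_1$'s parent (the highest certified block known to the leader of $e_1$). Nothing guarantees that at the start of $e_1$: the highest certified block held by correct processes need not yet be globally disseminated. The paper spends epoch $e_1$ precisely on this: it lets $B$ denote the highest block certified by any correct process at the start of $e_1$, uses the three rounds of $e_1$ to propagate $B$ (this is where the \emph{first} factor $1-\tfrac{n-1}{3\prProp}$ is consumed), and only then has the leader of $e_2$ propose $B_1$ extending $B$, so that the hypothesis of Corollary~\ref{cor_liveness_5} is met. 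The committed chain is therefore $B_1,\dots,B_\kappa$ proposed in $e_2,\dots,e_{\kappa+1}$ and certified in $e_3,\dots,e_{\kappa+2}$. In your alignment the $\kappa$ propagation factors end up attached to $B_1,\dots,B_\kappa$ (enabling votes on $B_2,\dots,B_{\kappa+1}$), so the last one is superfluous while the crucial first one---propagating $B_1$'s parent---is missing. Shifting all your block indices up by one epoch (equivalently, treating $e_1$ as a warm-up propagation epoch) closes this gap and reproduces the paper's argument verbatim.
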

\begin{proof}
    Suppose there are $\kappa+2$ consecutive epochs $e_1,\dots,e_{\kappa+2}$, each with a correct leader (equivalently, each process remains leader for $\kappa+2$ epochs).
    At the beginning of epoch~$e_1$, suppose block~$B$ has the greatest height among the blocks certified by correct processes.
    By Theorem~\ref{thm:all:receive}, all correct processes receive $B$ withing the three rounds of epoch~$e_1$ with probability $1 - \frac{n-1}{3 \prProp}$.
    Hence, the leader of epoch~$e_{2}$ proposes a block $B_{1}$ that extends $B$, and every correct process evaluates the proposed block as valid.
    Then, a quorum certificate is created for $B_1$ in epoch~$e_{3}$ with the probability given by Corollary~\ref{cor_liveness_5}.
    Accordingly, the probability of creating such a quorum certificate is at least
    \begin{align*}
        \left( 1 - \frac{13}{(1-\epsilon)n \cdot \prVote} \right) \cdot 
        \left( 1 - \frac{13}{(1 - \epsilon)n} - \frac{8}{(1 - \epsilon)n \prSample} \right)
        \cdot 
        \left( 1 - \frac{n-1}{3 \prProp} \right).
    \end{align*}
    
    In epoch $e_{3}$, the leader proposes block~$B_2$.
    Recall that a correct process votes for block~$B_{2}$ if it has received block~$B_{1}$; the probability that all correct processes receive $B_1$ by the vote round of epoch~$e_3$ is at least $1 - \frac{n-1}{3 \prProp}$.
    Besides, a quorum certificate is created for $B_2$ in epoch~$e_{4}$ with the probability given by Corollary~\ref{cor_liveness_5}.
    For the remaining epochs and blocks, we can use the same argument.
    Accordingly, block~$B_\kappa$ is certified in epoch~$e_{\kappa+2}$ with probability:
    \begin{align*}
        \left(  
        \left( 1 - \frac{13}{(1-\epsilon)n \cdot \prVote} \right) \cdot 
            \left( 1 - \frac{13}{(1 - \epsilon)n} - \frac{8}{(1 - \epsilon)n \prSample} \right)
        \cdot \left( 1 - \frac{n-1}{3 \prProp} \right)\right)^\kappa.
    \end{align*}
    As a result, block~$B_1$ is committed in epoch~$e_{\kappa+2}$ with the above probability.
\end{proof}

\end{document}